
\documentclass[10pt,journal,compsoc]{IEEEtran}

\usepackage{amsthm}
\usepackage{graphicx}
\usepackage{textcomp}
\usepackage{xcolor}
\usepackage{amssymb}
\usepackage{amsfonts}
\usepackage{fdsymbol}
\newtheorem{definition}{Definition}
 
\newtheorem{theorem}{Theorem}

\usepackage[numbers,sort&compress]{natbib}
\usepackage{bm}
\usepackage{subfigure} 
\makeatletter
\newif\if@restonecol
\makeatother

\usepackage[linesnumbered,ruled,vlined]{algorithm2e}
\usepackage{algpseudocode}
\usepackage{amsmath}
\usepackage[ruled,linesnumbered]{algorithm2e}
\usepackage{setspace}
\newcommand{\nop}[1]{}

\ifCLASSINFOpdf
\else
\fi
%
%

\hyphenation{op-tical net-works semi-conduc-tor}

\begin{document}
%
\title{DP2-Pub: Differentially Private High-Dimensional Data Publication with Invariant Post Randomization}

\author{Honglu~Jiang,~\IEEEmembership{Member,~IEEE,}
        Haotian~Yu,
        Xiuzhen~Cheng,~\IEEEmembership{Fellow,~IEEE,}
        Jian~Pei,~\IEEEmembership{Fellow,~IEEE,}
        Robert Pless,~\IEEEmembership{Member,~IEEE,}
        and Jiguo~Yu,~\IEEEmembership{Fellow,~IEEE,}
\IEEEcompsocitemizethanks{\IEEEcompsocthanksitem H. Jiang is with the Department of Computer Science and Software Engineering, Miami University, Oxford, OH 45056, USA, and the Department of Computer Science, The George Washington University, Washington, DC 20052, USA. E-mail: jiangh34@miamioh.edu.
\IEEEcompsocthanksitem H. Yu is with the Department of Data Analytics, The George Washington University, Washington, DC 20052 USA. E-mail: yuxx6789@gwu.edu.

\IEEEcompsocthanksitem X. Cheng is with the Department of Computer Science, The George Washington University, Washington, DC 20052 USA, and the School of Computer Science and Technology, Shandong University, Qingdao 266510, China. E-mail: xzcheng@sdu.edu.cn.

\IEEEcompsocthanksitem J. Pei is with the Departments of Computer Science, Biostatistics and Bioinformatics, and Electrical and Computer Engineering, Duke University, Durham, NC 27708, USA. E-mail: j.pei@duke.edu.

\IEEEcompsocthanksitem R. Pless is with the Department of Computer Science, The George Washington University, Washington, DC 20052 USA. E-mail: pless@gwu.edu.
\IEEEcompsocthanksitem J. Yu (Corresponding Author) is with Big Data Institute, Qilu University of Technology, Jinan, 250353, P.R. China, and Shandong Fundamental Research Center for Computer Science, Qilu University of Technology, Jinan, Shandong, 250353, P.R. China. E-mail: jiguoyu@sina.com.}

\thanks{Manuscript received; revised.}}
%
%

\markboth{Journal of ,~Vol.~, No.~, ~2022}%
{Shell \MakeLowercase{\textit{et al.}}: Bare Demo of IEEEtran.cls for Computer Society Journals}
%



\IEEEtitleabstractindextext{%
\begin{abstract}
A large amount of high-dimensional and heterogeneous data appear in practical applications, which are often published to third parties for data analysis, recommendations, targeted advertising, and reliable predictions. However, publishing these data may disclose personal sensitive information, resulting in an increasing concern on privacy violations. Privacy-preserving data publishing has received considerable attention in recent years. Unfortunately, the differentially private publication of high dimensional data remains a challenging problem. In this paper, we propose a differentially private high-dimensional data publication mechanism (DP2-Pub) that runs in two phases: a Markov-blanket-based attribute clustering phase and an invariant post randomization (PRAM) phase. Specifically, splitting attributes into several low-dimensional clusters with high intra-cluster cohesion and low inter-cluster coupling helps obtain a reasonable allocation of privacy budget, while a double-perturbation mechanism satisfying local differential privacy facilitates an invariant PRAM to ensure no loss of statistical information and thus significantly preserves data utility. We also extend our DP2-Pub mechanism to the scenario with a semi-honest server which satisfies local differential privacy. We conduct extensive experiments on four real-world datasets and the experimental results demonstrate that our mechanism can significantly improve the data utility of the published data while satisfying differential privacy. 
\end{abstract}

\begin{IEEEkeywords}
High-dimensional data, differential privacy, Bayesian network, Markov-blanket, invariant PRAM.
\end{IEEEkeywords}}

\maketitle

\IEEEdisplaynontitleabstractindextext

%
\IEEEpeerreviewmaketitle

\IEEEraisesectionheading{\section{Introduction}\label{sec:introduction}}

%
%
%
%
\IEEEPARstart{T}{he} rapid development of information technology has opened up the era of big data. Collecting and publishing an unprecedented amount of data, as well as mining data correlations and generating insights have become an important component of social statistical research \cite{GNAH}. A large amount of high-dimensional and heterogeneous data appear in various applications, which are often published to third parties for data analysis, recommendations, targeted advertisements, and reliable predictions. Examples include healthcare data, social networking data, Internet of Things data (i.e., IoT device monitoring data, location data, trajectory data), financial market data (i.e., electronics commercial data, credit card data), which can be used to dig out valuable information hidden behind the massive data for modern life. However, publishing these data may disclose personal sensitive information, resulting in an increasing concern of privacy violations. Privacy-preserving data publishing (PPDP) has gained significant attentions in recent years  as a promising approach for information sharing while preserving data privacy \cite{FBW}.

Generally speaking, commonly used approaches for PPDP can be characterized into three categories: encryption technology \cite{GRM}, $k$-anonymity \cite{SLA} and its derivative approaches ($l$-diversity \cite{MKV}, $t$-closeness \cite{LLV}), and differential privacy \cite{DCD}. Differential privacy has gradually become the \emph{de facto} standard privacy definition and provides a strong privacy guarantee. It rests on a sound mathematical foundation with a formal definition and rigorous proof while making the assumption that an attacker has the maximum background knowledge.

However, the differentially private publication of high dimensional data remains a challenging problem -- it suffers from the ``Curse of High-Dimensionality'' \cite{RYY}, that is, when the dimensionality increases, the complexity and cost of multi-dimensional data processing and analysis increases exponentially. Specifically, this curse is manifested in two aspects: first, since the high-dimensional data space is usually sparse, high dimensions and large attribute domains lead to a low “Signal-to-Noise Ratio" and low data utility; second, complex correlations exist between high-dimensional data, therefore the change of a single record may have a great impact on query results, leading to increased sensitivity.

To address these challenges, an effective way is to decompose high-dimensional data into a set of low-dimensional marginal tables along with inferring the joint distributions of the data, thus generating a synthetic dataset. A representative solution is PrivBayes \cite{ZCG}, which constructs a Bayesian network to model the data correlations and conditional probability distributions, allowing one to approximate the distributions of the original data using a set of low-dimensional marginal distributions. However, such an approach suffers from poor data utility and high communication cost, since too much noise is added when there are too many attribute pairs resulting in unreliable conditional probabilities. Moreover, most approaches generally ignore the different roles a dimension may play for a specific query – one dimension may be more important than another for a particular query. Additionally, one dimension may release more information than another if the same amount of noise is added; thus evenly allocating the total privacy budget to each dimension degrades the performance.

In this paper, we provide a two-phase mechanism (DP2-Pub) consisting of a Markov-blanket-based learning process and an invariant post randomization (PRAM) process satisfying local differential privacy to overcome the above difficulties. Our contributions can be summarized as follows:

\begin{itemize}

\item To capture the dependencies between the attributes in the dataset, we resort to differentially private Bayesian network construction, employing the exponential mechanism to attribute pairs using the mutual information as the score function.

\item We propose the procedure of attribute clustering with a Markov blanket learning algorithm based on the constructed Bayesian network. Our most fundamental purpose is to split attributes into several low-dimensional clusters with high intra-cluster cohesion and low inter-cluster coupling, thus obtaining a reasonable allocation of privacy budget determined by the conditional independence among attributes and the importance of each cluster.

\item Invariant PRAM is an important perturbation technique for privacy protection, which transforms each record stochastically in a dataset using delicately pre-selected probabilities. It ensures no loss of statistical information, thus can significantly preserve data utility. Motivated by this, we provide a double-perturbation mechanism to achieve invariant PRAM for two-valued and multivalued attributes and apply it to each attribute cluster. Resorting to the randomized mapping based post-processing property for differential privacy, we prove that the proposed double-perturbation mechanism satisfies differential privacy. 

\item To tackle the data privacy preservation problem for the scenario where each individual contributes a single data record to a semi-honest server, we extend our DP2-Pub mechanism to handle the high-dimensional data publication in a local-differential-privacy manner, 
in which each user locally perturbs its data satisfying local differential privacy, then the server conducts all the operations including attribute clustering and post randomization over the privatized data.

\item We evaluate the performance of data utility on four real-world datasets from two aspects, the total variation distance between the original dataset and the perturbed dataset and the classification error rate of SVM classification on the perturbed dataset. Experimental results indicate that our approach can obtain higher data utility of the published data compared with the state-of-the-art.

\end{itemize}

The rest of this paper is organized as follows. We provide a literature review in Section \ref{sec:rel}. Section \ref{sec:model} formulates our problem and presents necessary background knowledge on Bayesian network, differential privacy, random response and post randomization. In Section \ref{sec:our}, we propose our DP2-Pub mechanism by detailing the constructions of differentially private Bayesian network, attribute clustering, and invariant PRAM. Comprehensive experimental studies on four real-world datasets are presented in Section \ref{sec:eva}. Section \ref{sec:con} concludes the paper with a future research discussion.

\section{Related Work}\label{sec:rel}

Various differentially private mechanisms for high-dimensional data publications have been proposed in recent years. In this section, we briefly review the most relevant works from two perspectives: under centralized setting or distributed setting, and discuss how our work differs from the existing ones.

\subsection{Private Mechanisms Under Centralized setting}

A powerful approach of dimensionality reduction is the Bayesian network model proposed in \cite{ZCPC}, in which Zhang \emph{et al.} developed a differentially private scheme PrivBayes for publishing high-dimensional data. PrivBayes first constructs a Bayesian network to approximate the distribution of the original dataset, then adds noise into each marginal of the Bayesian network to guarantee differential privacy, next constructs an approximate distribution of the original dataset, and finally samples the tuples from the approximate distribution to construct a synthetic dataset. 

Researchers also have developed sampling techniques to support differentially private high-dimensional data publications. In \cite{CXZ}, Chen \emph{et al.} provided a solution to protect the joint distribution of the dimensions in a high-dimensional dataset compared with PrivBayes. They first established a robust sampling-based approach to investigate the dependencies over all attributes for constructing a dependence graph, then applied a junction tree algorithm to provide an inference mechanism for deriving the joint data distribution.
In \cite{LXJ}, Li \emph{et al.} proposed a differentially private data synthetization technique called DPCopula using Copula functions to handle multi-dimensional data. 
In \cite{XRZ}, Xu \emph{et al.} developed a high-dimensional data publishing algorithm under differential privacy to optimize the utility by first projecting a $d$-dimensional vector of user's attributes into a lower $k$-dimensional space using a random projection, then adding Gaussian noise to each resultant vector to obtain a synthetic dataset.

\subsection{Private Mechanisms Under Distributed setting}

The approaches mentioned above mainly consider centralized scenarios. Some efforts have also been devoted to differentially private high-dimensional data publications under distributed setting. Based on PrivBayes, Cheng \emph{et al.} \cite{CTS} considered a multi-party setting from multiple data owners and proposed a differentially private sequential update of the Bayesian network (DP-SUBN) approach, allowing the parties to collaboratively identify the Bayesian network that best approximates the joint distribution of the integrated dataset. 
Wang \emph{et al.} \cite{WBL} introduced a framework with a simple and generic aggregation and decoding technique. This framework can analyze, generalize and optimize several local differential privacy protocols \cite{EPK,BAS,FPE} for frequency estimation. In \cite{RYY}, Ren \emph{et al.} proposed a solution LoPub to realize high-dimensional data publication with local differential privacy in crowdsourced data publication systems. LoPub can first learn from the distributed data records to build correlations and joint distributions of attributes, then synthesize an approximate dataset achieving a good compromise between local differential privacy and data utility. In \cite{JGW}, Ju \emph{et al.} also considered the high-dimensional data publication problem under local differential privacy in the crowdsourced-sensing system. They proposed an aggregation and publication mechanism which provides local privacy guarantees for crowd-sensing users, approximates the statistical characteristics of high-dimensional perception data and publishes synthetic data. Wang \emph{et al.} \cite{WXY} proposed two mechanisms for collecting and analyzing users' private data under local  differential privacy, which can collect multidimensional data with both numerical and categorical attributes. In \cite{DFC}, Domingo-Ferrer developed several random-response-based complementary approaches for multi-dimensional data preservation. In \cite{TTC}, Takagi \emph{et al.} presented a privacy-preserving phased generative model (P3GM) for high-dimensional data, which employs a two-phase learning process for training the model to increase the robustness to the differential privacy constraint.

In light of the above analysis, the following aspects distinguish our work from the existing approaches. First, since the sensitivity of distinct dimensions are different and evenly allocating the total privacy budget to each dimension cannot obtain good performance, we take into account the privacy budget allocation problem to realize attribute clustering with a reasonable allocation of privacy budget. Second, we design an invariant PRAM mechanism instead of generating noisy conditional distributions of the Bayesian network, then apply it to each attribute cluster, which can significantly improve the data utility while satisfying local differential privacy.

\section{Problem Formulation and Preliminaries}\label{sec:model}

\subsection{Problem Formulation}

In this paper, we consider the following problem: a data server collects data containing a vast amount of individual information and aims to release an approximate dataset to third parties for their uses such as data analysis and recommendations. Let $D$ be the dataset, $n$ be the total number of records, and $\mathcal{A}=\{A_{1},A_{2}, \cdots A_{d}\}$ be the set of $d$ unique attributes. Assume that all attribute values are categorical as one can always discretize all numerical data. The domain of an attribute $A_{i}$ is denoted by $\Omega_{i}$, whose size is $|\Omega_{i}|$.

\subsection{Bayesian Network}

A Bayesian network is a type of probabilistic graphical model that approximately describes the joint distribution over a set of variables by specifying their conditional independence \cite{DKF}. More specifically, a Bayesian network is a directed acyclic graph (DAG) whose nodes represent attribute variables and edges model the direct dependence among attributes. Formally speaking, a Bayesian network $\mathcal{N}$ over $\mathcal{A}$ (the set of attributes in $D$) is defined as a set of $d$ attribute-parent (AP) pairs, $(A_1,\Pi_1), \cdots, (A_d,\Pi_d)$, where each AP contains a unique attribute and all its parent nodes in $\mathcal{N}$. If the maximum size of any parent set in $\mathcal{N}$ is $k$, we define $\mathcal{N}$ to be a $k$-degree Bayesian network. Let $Pr[\mathcal{A}]$ denote the joint distribution over all attributes in $D$. A Bayesian network $\mathcal{N}$ defines a way to approximate $Pr[\mathcal{A}]$ with $d$ conditional distributions $Pr[A_1|\Pi_1], Pr[A_2|\Pi_2], \cdots, Pr[A_d|\Pi_d]$, that is, $Pr_{\mathcal{N}}[\mathcal{A}]=\prod_{i=1}^{d}Pr[A_i|\Pi_i]$.
If $\mathcal{N}$ accurately captures the dependencies between the attributes in $\mathcal{A}$, $Pr_{\mathcal{N}}[\mathcal{A}]$ can be a good approximation to $Pr[\mathcal{A}]$. Moreover, the computation of $Pr_{\mathcal{N}}[\mathcal{A}]$ can be efficient and simple if the degree of $\mathcal{N}$ is small. Figure \ref{fig:baye} illustrates the Bayesian network over a set of five attributes, namely, \emph{age, gender, exposure to toxins, smoking}, and \emph{cancer}. Table \ref{tab:baye} shows the AP pairs in the sample Bayesian network.

\begin{figure}[!htb]
 \centering
  \includegraphics[width=0.25\textwidth]{./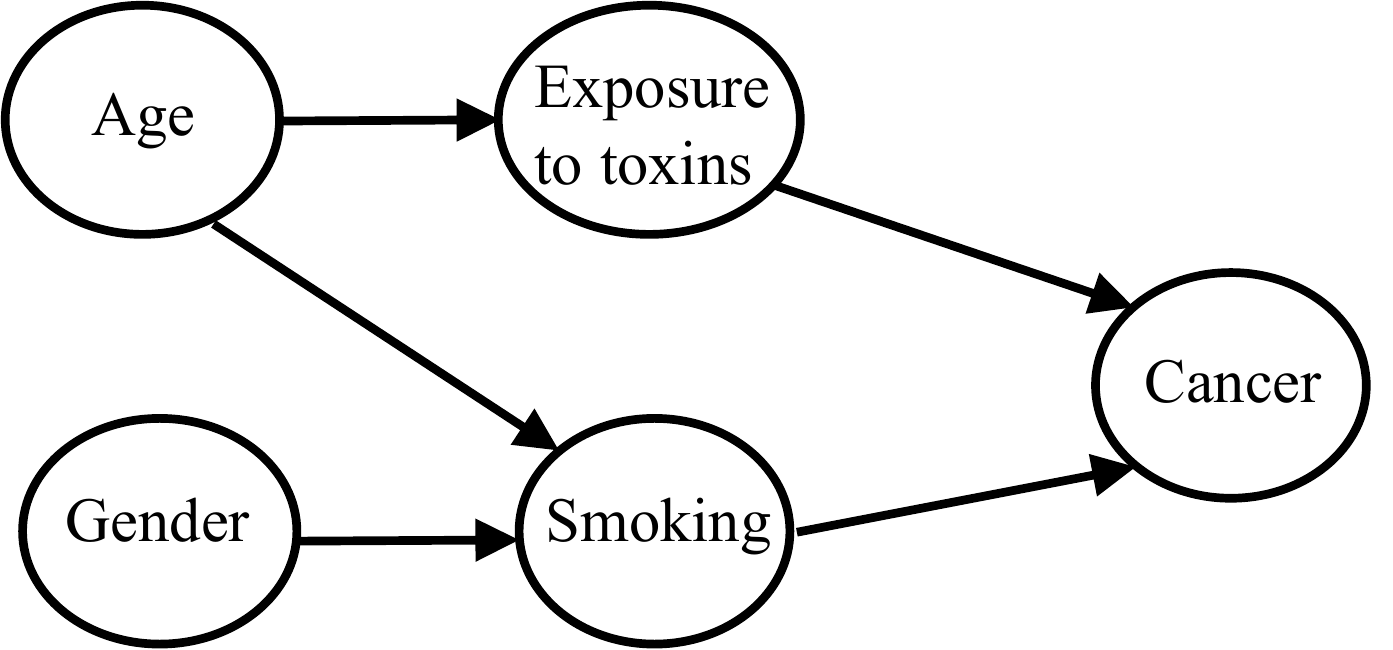}
  \caption{A Bayesian network over five attributes.}
  \label{fig:baye}
\end{figure}

\begin{table}[!htb]
\renewcommand{\arraystretch}{1.0}
\caption{The attribute-parent pairs in the Bayesian network shown in Fig.~\ref{fig:baye}} 
\centering
\scalebox{0.8}{
\begin{tabular}{cc}
\hline
\textbf{Attribute} \bm{$A_i$}                           &   \bm{$\Pi_i$}\\    
\hline 
\hline                       
Age                                                          & $\emptyset$          \\
\hline

Gender                                                       & $\emptyset$           \\
\hline

Exposure to toxins    & \{Age\}                         \\
\hline
Smoking                                                      & \{Age, Gender\}                 \\
\hline
Cancer          & \{Exposure to toxins, Smoking\}   \\
\hline
\end{tabular}}
\label{tab:baye}
\end{table}

 Given a dataset $D$, our goal is to construct a $k$-degree Bayesian network $\mathcal{N}$ which provides an accurate approximation to the full distribution of $D$. That is, $Pr_{\mathcal{N}}[\mathcal{A}]$ should be close to $Pr[\mathcal{A}]$. As $KL$-divergence \cite{TSAB} is commonly used as a measure of the similarity between a probability distribution and a candidate (estimated) distribution, in this paper, we adopt the $KL$-divergence of $Pr_{\mathcal{N}}[\mathcal{A}]$ and $Pr[\mathcal{A}]$ to measure the difference between these two probability distributions:
\begin{small}
\begin{equation}\nonumber
KL(Pr[\mathcal{A}],Pr_{\mathcal{N}}[\mathcal{A}])=\sum_{i=1}^{d}H(A_{i})-\sum_{i=1}^{d}I(A_{i},\Pi_{i})-H(\mathcal{A})
\end{equation}
\end{small}
where $H(A_i)$ denotes the entropy of the random variable $A_i$,
\begin{small}
\begin{equation}\nonumber
H(A_i)=-\sum\limits_{(x\in \Omega_i)}Pr[A_i=x] \log Pr[A_i=x]
\end{equation}
\end{small}
and $I(\cdot,\cdot)$ denotes the mutual information between the two variables:
\begin{small}
\begin{equation}\nonumber
\begin{aligned}
&I(A_i,\Pi_i)\\[0.9mm]
&=\sum\limits_{x\in \Omega_i} \! \sum\limits_{y \in dom(\Pi_i)}Pr[A_i =x,\Pi_i=y]\log\frac{Pr[A_i=x, \Pi_i=y]}{Pr[A_i=x]Pr[\Pi_i=y]}
\end{aligned}
\end{equation}
\end{small}
Here, $Pr[A_i,\Pi_i]$ is the joint distribution of $A_i$ and $\Pi_i$,  $Pr[A_i]$ and $Pr[\Pi_i]$ are the marginal distributions of $A_i$ and $\Pi_i$, respectively, and $H(\mathcal{A})$ is the joint entropy of all attribute variables in $\mathcal{A}$, which is defined as:
\begin{equation}\nonumber
\begin{aligned}
&H(\mathcal{A})=H(A_1,A_2,\cdots,A_d)\\
&\resizebox{1\hsize}{!}{$=-\sum\limits_{(x_1\in \Omega_1)}\cdots\sum\limits_{(x_d\in \Omega_d)} \!Pr[A_1\!=x_1,\cdots, A_d\!=x_d] \log Pr[A_1\!=x_1,\cdots,A_d\!=x_d]$}
\end{aligned}
\end{equation}
Therefore, learning a Bayesian network is to find $\mathcal{N}$ from $D$ with the minimum $KL(Pr[\mathcal{A}],Pr_{\mathcal{N}}[\mathcal{A}])$. The construction of $\mathcal{N}$ can be modeled as choosing a parent set $\Pi_{i}$ for each attribute $A_{i}$ to maximize $\sum_{i=1}^{d}I(A_{i},\Pi_{i})$ since $\sum_{i=1}^{d}H(A_{i})-H(\mathcal{A})$ is fixed once the dataset $D$ is given.

\subsection{Differential Privacy}

\emph{Differential privacy} (DP) has become the \emph{de facto} standard of privacy preservation, which ensures that query results of a dataset are insensitive to the change of a single record. 
Differential privacy is defined based on the neighboring datasets $D$ and $D'$, where $D'$ differs from $D$ by only one record:

\begin{definition}[Differential privacy \cite{DMN}] \label{def:DP}
A randomized algorithm $M$ is $\epsilon$-differentially private if for any pair of neighboring datasets $D$ and $D'$, and for all sets $S$ of possible outputs, we have
\begin{small}
\begin{equation}
Pr[M(D)\in S]\leq e^{\epsilon}Pr[M(D')\in S],\nonumber
\end{equation}
\end{small}
where $\epsilon$ is often a small positive real number. 
\end{definition}

 The smaller the $\epsilon$, the higher the level of privacy preservation. A smaller $\epsilon$ provides greater privacy preservation at the cost of lower data accuracy with more added noise. Differential privacy can be achieved by two best known mechanisms, namely the \emph{Laplace mechanism} \cite{DMN} and the \emph{exponential mechanism} \cite{MTK}. 
We provide the formal definition of the exponential mechanism as follows:

\begin{definition}[Exponential Mechanism \cite{MTK}] \label{def:exp}
 Given a random algorithm $M$ with the input dataset $D$ and the output entity object $o\in R$, where $R$ is the output range. Let $q(D,o)$ be the utility function and $\Delta q$ be the global sensitivity of function $q(D,o)$. If algorithm $M$ selects and outputs $o$ from $R$ at a probability proportional to $\exp(\frac{\epsilon q(D,o)}{2\Delta q})$, then $M$ is $\epsilon$-differentially private.
\end{definition}

DP techniques implicitly assume a trusted third party to collect data and thus can hardly be applied to the case where a server is not reliable. Therefore \emph{local differential privacy} (LDP) emerges, in which each user independently and locally conducts data perturbation. 
The formal definition of LDP can be shown as follows:

\begin{definition}[Local Differential Privacy \cite{DJW}] \label{ldp}
Consider $n$ records. A privacy algorithm $M$ with domain $Dom (M)$ and range $Ran (M)$ satisfies the $\epsilon$-local differential privacy if $M$ obtains the same output result $t^{*}$ $(t^{*}\subseteq Ran(M))$ on any two records $t$ and $t'$ $(t, t'\in Dom(M))$ with
$$Pr[M(t)=t^{*}]\leq e^{\epsilon}\times Pr[M(t')=t^{*}]$$
\end{definition}

Local differential privacy ensures the similarity between the output results of any two records. Random response (RR) \cite{WSL} is currently the most widely used technique for achieving local differential privacy.

\subsection{Random Response and Post Randomization}

Random response (RR) is a technique developed in social science to collect statistical data about individuals' sensitive information. Its main idea is to provide data privacy protection by making use of the uncertainty of responses to sensitive questions. 
Privacy comes from the randomness of the answers while accuracy comes from the noise generation procedure \cite{DAR}. 

Post randomization (PRAM) is another important perturbation technique for privacy protection, which stochastically transforms each record in a dataset using pre-selected probabilities. For a random variable $X$ with $s$ categories $c_1, c_2, \cdots,c_s$, let $\pi_i=Pr[X=c_i], i=1,\cdots,s$, and $\vec \pi=(\pi_1,\cdots,\pi_s)^\mathrm{T}$. The basic idea of PRAM is to select a transition probability matrix $P=(p_{ij})$ with $\sum_{j}p_{ij}=1$ for $i=1,2,\cdots,s$. Then the original category $c_i$ is changed to $c_j$ with probability $p_{ij}$. Let $Z$ denote the transformed variable. We have  $p_{ij}=Pr(Z=c_j|X=c_i)$, $\lambda_i=Pr[Z=c_i], i=1,\cdots,s$, and $\vec\lambda=(\lambda_1,\cdots,\lambda_s)^\mathrm{T}$.

Mathematically, PRAM is equivalent to RR.  Therefore many mathematical results developed for RR such as the local differential privacy guarantee can be applied to PRAM \cite{TSA}. In this paper, we employ PRAM for the case when a trusted data server is available (Section~\ref{sec:our}), where all the data can be processed at the server to maintain differential privacy,  and RR for the  case when the server is semi-honest, in which case local differential privacy is adopted for collecting data from each user to the server (Section~\ref{sec:extension}). 

\section{DP2-Pub With a Trusted Server}\label{sec:our}

In this section, we propose a novel \textbf{d}ifferentially \textbf{p}rivate high-dimensional data \textbf{pub}lication mechanism based on a \textbf{d}ouble-\textbf{p}erturbation process, namely DP2-Pub, assuming the availability of a trusted server that can access the original data. We first present an overview on DP2-Pub, then detail its modules in the following subsections. 

\subsection{Overview}\label{overview}

Figure \ref{fig:overview} illustrates the main procedure of DP2-Pub, which runs in two phases of attribute clustering and data randomization, with both being performed by the trusted server. Since both phases require access to the original dataset, we divide the total privacy budget $\epsilon$ into two portions with $\epsilon_1$ being used for the first phase and $\epsilon_2$ for the second phase, and demonstrate that the two phases are both differentially private. 

\begin{figure}[!htb]
 \centering
  \includegraphics[width=0.38\textwidth]{./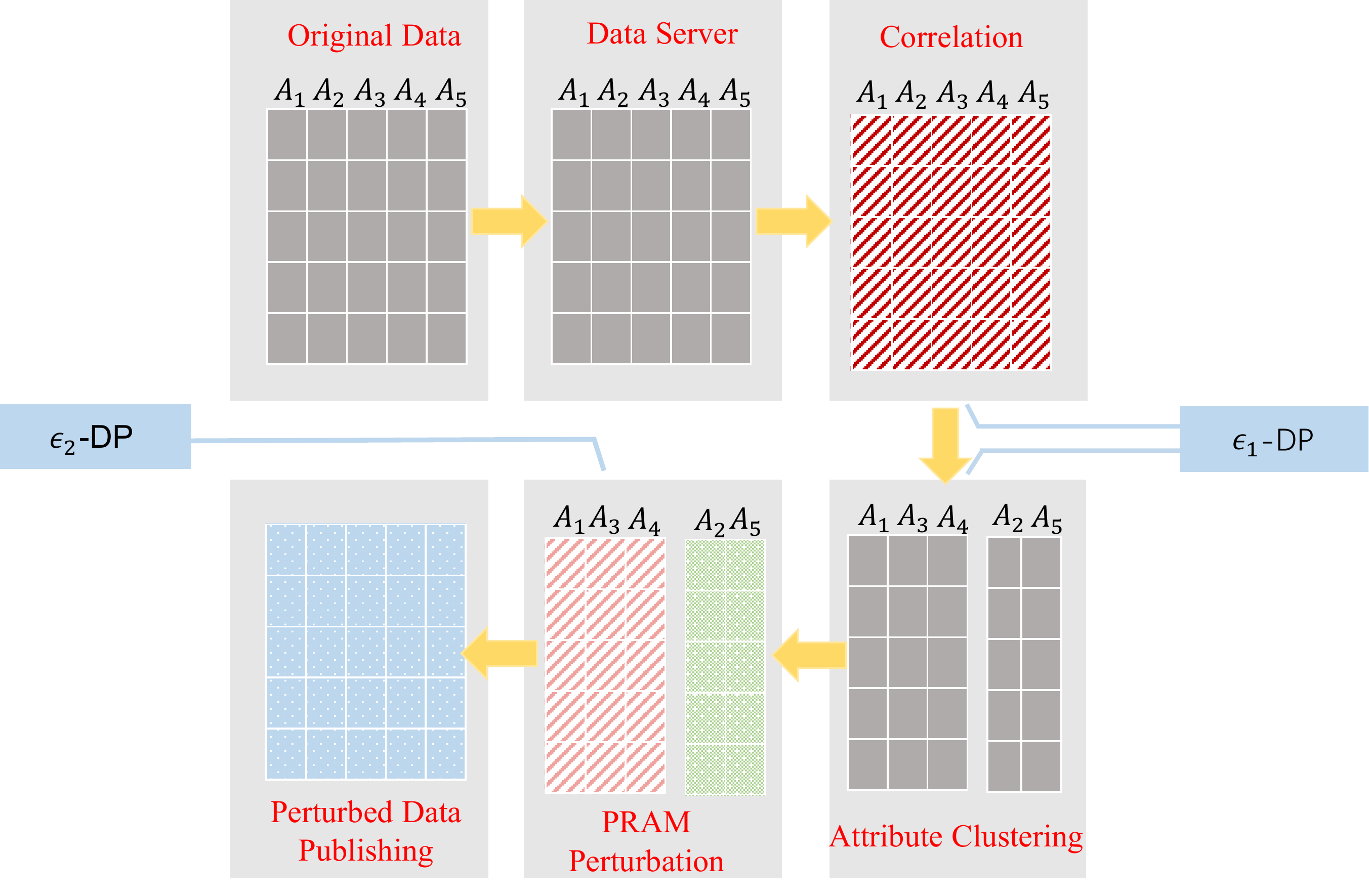}
  \caption{Overview of DP2-Pub.}
  \label{fig:overview}
\end{figure}

\textbf{1. Bayesian Network and Attribute Clustering.} 
To learn the correlations between different attribute variables, we adopt the approach of constructing a differentially private Bayesian network through the exponential mechanism presented in \cite{ZCG}. Based on the constructed Bayesian network, we propose the procedure of attribute clustering using the Markov blanket model to achieve high intra-cluster cohesion and low inter-cluster coupling. Each cluster is composed of a cluster head and its Markov blanket members, thus the attribute set can be divided into a number of disjoint clusters denoted as $CL_{1}, \cdots, CL_{t}$. Our most fundamental purpose is to realize attribute clustering, and then to obtain a reasonable allocation of privacy budget for each cluster based on its importance.  


\textbf{2. Data Randomization.} We propose an invariant post randomization method (PRAM) and apply it to each attribute cluster. Note that PRAM is an important technique for data perturbation, and that a PRAM  is invariant if the transition probability matrix $P$ satisfies $P\vec{\pi}=\vec{\pi}$ (except for the identity matrix $\mathbb{I}$). The appealing advantage of an invariant PRAM lies in that there is no loss of statistical information and thus can significantly preserve data utility. The key point to construct an invariant PRAM is to delicately solve $P$ satisfying $P\vec{\pi}=\vec{\pi}$. We design a double-perturbation mechanism to achieve the invariant property of PRAM for two-valued and multivalued attributes.

\subsection{Differentially Private Bayesian Network Construction}\label{bay}
 
In this section, we adopt the algorithm proposed in \cite{ZCG} to construct our Bayesian network in a differentially private manner, which employs the exponential mechanism to select $(A_{i},\Pi_{i})$, using the mutual information $I$ as the score function. \nop{ Note that the original algorithm in \cite{ZCG} selects the first attribute randomly, leading to a great uncertainty of the construction of Bayesian network, while in our algorithm we choose the one with the highest entropy. As the parent of the first selected attribute is always $\emptyset$, which attribute is selected does not impact on the differential privacy guarantee.} For completeness, we present the algorithm in Algorithm~\ref{ag1}.

\begin{algorithm}
\footnotesize
\caption{DP-Bayesian Network Construction}
\label{ag1}
\setstretch{1}
\LinesNumbered
\KwIn{Dataset $D$; $k$, the degree of the Bayesian network; and  the set of attributes $\mathcal{A}$}
\KwOut{Bayesian network $\mathcal{N}$ }
Initialize $\mathcal{N}=\emptyset$ and $V=\emptyset$;

Randomly select an attribute as $A_{1}$, add $A_{1}$ to $V$, add $(A_{1},\emptyset)$ to $\mathcal{N}$;

  \For{$i=2$ to $d$}
  {
     initialize $\Psi=\emptyset$;

 for ech $A_i\in \mathcal{A}\setminus V$ and each $\Pi_i \in \tbinom{V}{k}$, add $(A_i, \Pi_i)$ to $\Psi$; //$\tbinom{V}{k}$ denotes the set of all subsets of $V$ with size of $\min(k,|V|)$

Select an AP pair $(A_{i},\Pi_{i})$ from $\Psi$ at  a probability  proportional to $\exp(\frac{\frac{\epsilon_{1}}{d-1} \cdot I(A_{i},\Pi_{i})}{2\Delta})$; //$\Delta$ denotes the sensitivity of the mutual information function.

Add $A_{i}$ to $V$ and $(A_{i},\Pi_{i})$ to $\mathcal{N}$;
}

\textbf{return} $\mathcal{N}$
\end{algorithm}

At the beginning of Algorithm \ref{ag1}, we initialize the Bayesian network $\mathcal{N}$ to be an empty set. Let $V$ denote the set containing the attributes whose parent sets have been fixed and the initial set of $V$ is empty (Line 1). Then we randomly select an attribute from the $d$ attributes as the initial node and set its parent set empty (Line $2$). For each $A_i$, its AP pair is selected in a differentially private manner by the exponential mechanism (Lines $3$-$7$), of which the mutual information $I$ is taken as the score function, and $\Delta$ is its global sensitivity. Algorithm \ref{ag1} ensures that each invocation of the exponential mechanism satisfies $(\epsilon_1/(d-1))$-differential privacy, and the exponential mechanism is invoked $d-1$ times, so the construction of $\mathcal{N}$ is $\epsilon_1$-differentially private based on DP's composability property  (Lines $3-7$). 
We adopt the calculation of the sensitivity $\Delta$ of the mutual information in \cite{ZCG}, which is shown as follows:
\begin{equation}\label{eq1}
\Delta=\begin{cases}
    \frac{1}{n}\log(n)+\frac{n-1}{n}\log(\frac{n}{n-1}), &\text{if $A_i$ or $\Pi_i$ is binary,}\\[1.5mm]
   \frac{2}{n}\log(\frac{n+1}{2})+\frac{n-1}{n}\log(\frac{n+1}{n-1}), &\text{otherwise.} 
  \end{cases}
\end{equation}

\nop{
\begin{theorem}
Algorithm \ref{ag1} satisfies $\epsilon_1$-differential privacy.
\end{theorem}

\begin{proof}
For each $A_i$, Algorithm \ref{ag1} selects an AP pair at a probability proportional to $\exp(\frac{\frac{\epsilon_{1}}{d-1} \cdot I(A_{i},\Pi_{i})}{2\Delta})$, of which $\Delta$ is the global sensitivity of $I$. We denote by $M(\Psi,I,\epsilon)$ the exponential mechanism in Algorithm \ref{ag1}, while the neighboring datasets $\Psi$ and $\Psi'$ for this mechanism differ by only one AP pair (each AP pair is considered as an output entity). For simplicity, we use $o$ instead of $(A_i,\Pi_i)$ to represent an AP pair, which is the output result of $M(\Psi,I,\epsilon)$, and $S_R$ is the set of all possible AP pairs.

The property of the exponential mechanism is shown as follows:
\begin{small}
\begin{equation}\nonumber
\begin{aligned}
&\frac{\exp(\frac{\frac{\epsilon_{1}}{d-1} \cdot I_{\Psi}(o)}{2\Delta})}{\exp(\frac{\frac{\epsilon_{1}}{d-1} \cdot I_{\Psi'}(o)}{2\Delta})}
=\exp(\frac{\frac{\epsilon_{1}}{d-1}(I_{\Psi}(o)-I_{\Psi'}(o))}{2\Delta})\\[1.2mm]
&\leq \exp(\frac{\frac{\epsilon_{1}}{d-1}\cdot\Delta}{2\Delta})=\exp(\frac{\epsilon_{1}}{2(d-1)})
\end{aligned}
\end{equation}
\end{small}

We can compare the probabilities of $M(\Psi,I,\epsilon)$ and $M(\Psi',I,\epsilon)$ obtaining the same output $o$:
\begin{small}
\begin{equation}\nonumber
\begin{aligned}
&\frac{Pr[M(\Psi,I,\epsilon)=o]}{Pr[M(\Psi',I,\epsilon)=o]}\\[1.2mm]
&=\frac{\exp(\frac{\frac{\epsilon_{1}}{d-1} \cdot I_{\Psi}(o)}{2\Delta})}{\sum_{o'\in S_R}\exp(\frac{\frac{\epsilon_{1}}{d-1} \cdot I_{\Psi}(o')}{2\Delta})}\times\frac{\sum_{o'\in S_R}\exp(\frac{\frac{\epsilon_{1}}{d-1} \cdot I_{\Psi'}(o')}{2\Delta})}{\exp(\frac{\frac{\epsilon_{1}}{d-1} \cdot I_{\Psi'}(o)}{2\Delta})}\\[1.2mm]
&\leq \frac{e^{\frac{\epsilon_{1}}{2(d-1)}}\exp(\frac{\frac{\epsilon_{1}}{d-1} \cdot I_{\Psi'}(o)}{2\Delta})}{\exp(\frac{\frac{\epsilon_{1}}{d-1} \cdot I_{\Psi'}(o)}{2\Delta})}
\times \frac{e^{\frac{\epsilon_{1}}{2(d-1)}}\sum_{o'\in S_R}\exp(\frac{\frac{\epsilon_{1}}{d-1} \cdot I_{\Psi}(o')}{2\Delta})}{\sum_{o'\in S_R}\exp(\frac{\frac{\epsilon_{1}}{d-1} \cdot I_{\Psi}(o')}{2\Delta})}\\[1.2mm]
&=\exp(\frac{\epsilon_1}{d-1})
\end{aligned}
\end{equation}
Each invocation of the exponential mechanism satisfies $(\epsilon_1/(d-1))$-differential privacy (Line $6$), and the exponential mechanism is invoked $d-1$ times (Line $3$-$7$); thus the construction of $\mathcal{N}$ is $\epsilon_1$-differentially private based on the composition property \cite{FDM}.
\end{proof}
}

\subsection{Attribute Clustering}\label{att}

Given a Bayesian network, the Markov blanket $MB(x)$ of an attribute variable $x$ can be intuitively represented as the set of parent nodes $Pa(x)$ and child nodes $Ch(x)$ of $x$ as well as the set of parent nodes of $x$'s child nodes, which can be formalized as follows:
\begin{small}
\begin{equation}\nonumber
MB(x)=Pa(x)\bigcup Ch(x)\bigcup \{Pa(y)|y\in Ch(x)\}
\end{equation}
\end{small}
We propose a procedure of attribute clustering shown in Algorithm \ref{ag2}. First, we initialize the set $S$ to include all the attributes of $\mathcal{N}$. Then we randomly select an attribute $x$, add $MB(x)$ and $x$ into a cluster, and delete all these attributes from $S$. Repeat this procedure until $S$ is empty. Each cluster is composed of a cluster head (an attribute variable $x$) and its Markov blanket members. Thus, the attribute set can be divided into a number of disjoint clusters, which can be denoted as $CL_{1}, \cdots, CL_{t}$. 

\begin{algorithm}
\footnotesize
\caption{Attribute Clustering}
\label{ag2}
\setstretch{1}
\LinesNumbered
\KwIn{Bayesian Network $\mathcal{N}$}
\KwOut{Cluster $CL_{1}, CL_{2} \cdots, CL_{i}$}
 
$S$=Set of all attributes of $\mathcal{N}$;

i=0;

\While{$S \neq \emptyset$} {

i=i+1;

Randomly select the attribute $x$ in $S$ and let $CL_{i}=MB(x)\bigcup\{x\}$;

$S=S-CL_{i}$;
 
}
\textbf{return} $CL_{1}, CL_{2} \cdots, CL_{i}$
\end{algorithm}

The key to overcome the curse of dimensionality is to decompose high-dimensional data into a set of low-dimensional data based on the conditional independences of the data. Bayesian network and Markov blanket are the most widely used graphical models for identifying a minimal set of attributes with strong correlations. Specifically, for any attribute variable $A_i$ in the Bayesian network, its Markov blanket is the set of attributes which are strongly corelated to $A_i$, while the attributes not in $A_i$’s Markov blanket are loosely correlated with $A_i$ or even conditionally independent of $A_i$. Therefore, our clustering algorithm yields attribute clusters with high intra-cluster correlation (cohesion) and low inter-cluster coupling which can improve the accuracy of the estimated joint distribution of the data. Note that the input of Algorithm \ref{ag2} is the differentially private Bayesian network constructed from Algorithm \ref{ag1}, which guarantees that the operation of attribute clustering does not break differential privacy.

According to the attribute clustering process, a reasonable allocation of the privacy budget for the next data randomization phase is determined by the conditional independence among the attributes in a cluster and the importance of the cluster based on the probability distributions over the dataset.  
Thus we define the importance factor (CIF) of each cluster $CL_{i}, 1\leq i\leq t$, in \eqref{cif}, which measures the importance of each cluster. The higher the CIF, the more important the cluster.

\begin{small}
\begin{equation}\label{cif}
CIF(CL_{i})={\frac{\sum\limits_{A_{j}\in CL_{i}}H(A_{j})}{\sum\limits_{k=1}^{d}H(A_{k})}}
\end{equation} 
\end{small}

Based on the CIF, one can allocate a privacy budget to each cluster, following the principle stating that the smaller the privacy budget, the higher the level of privacy preservation. Therefore, we define the privacy budget coefficient (PBC) for each cluster $CL_i$ as follows:

\begin{small}
\begin{equation}\label{PBC}
PBC(CL_{i})=\frac{\frac{1}{CIF(CL_{i})}}{\sum\limits_{j=1}^{t}\frac{1}{CIF(CL_{j})}}
\end{equation} 
\end{small}

Note that \eqref{PBC} conducts a normalization of PBC so that it falls into the $[0,1]$ interval. As the privacy budget of the data randomization is $\epsilon_{2}$, which will be carried out on each cluster at the server,  the privacy budget allocated for cluster $CL_i$ is $PBC(CL_{i})\cdot \epsilon_{2}$.

\subsection{Invariant PRAM}\label{inv}
 
The characteristic of an invariant PRAM lies in that the transition probability matrix $P$ satisfying $P\vec{\pi}=\vec{\pi}$. 
In this section, we propose an invariant PRAM scheme, which is suitable for 
categorical attributes. 
The main idea of our approach is to compute $P$ via 
double-perturbation, as shown in Figure \ref{fig:double}. For an attribute variable $X$, let $X_1$ denote the perturbed variable after the first perturbation, and $X_2$ denote the one after the second perturbation. We first construct a transition probability matrix $Q=(q_{ij})$ satisfying differential privacy and conduct the first perturbation on the attribute variable $X$ according $Q$. Then we compute the estimate of $\pi$ based on the perturbed data $X_1$, denoted as $\hat{\vec\pi}$, construct the transition probability matrix $\tilde{Q}=(\tilde{q}_{ij})$ for the second perturbation according to a specific rule to achieve $\tilde{Q}\cdot Q\cdot\vec{\pi} =\vec{\pi}$, and finally carry out the second perturbation on $X_1$ based on $\tilde{Q}$ to obtain $X_2$. More specifically, the rule of constructing $\tilde{Q}$ is to set $\tilde{q}_{ji}=Pr(X=c_i|X_1=c_j)$,  where $\tilde{q}_{ji}$ denotes the probability of $X_1$ being changed from category $c_j$ to $c_i$ in the second perturbation. In other words, $\tilde{Q}$ can be considered as the inverse of $Q$ while ensuring that $\tilde{Q}$ is also a transition probability matrix. Thus, the double-perturbation is an invariant PRAM with $P=\tilde{Q}\cdot Q$, and it satisfies differential privacy since the first perturbation satisfies differential privacy and the second perturbation is a randomized mapping of the first one.

\begin{figure}[!htb]
 \centering
  \includegraphics[width=0.35\textwidth]{./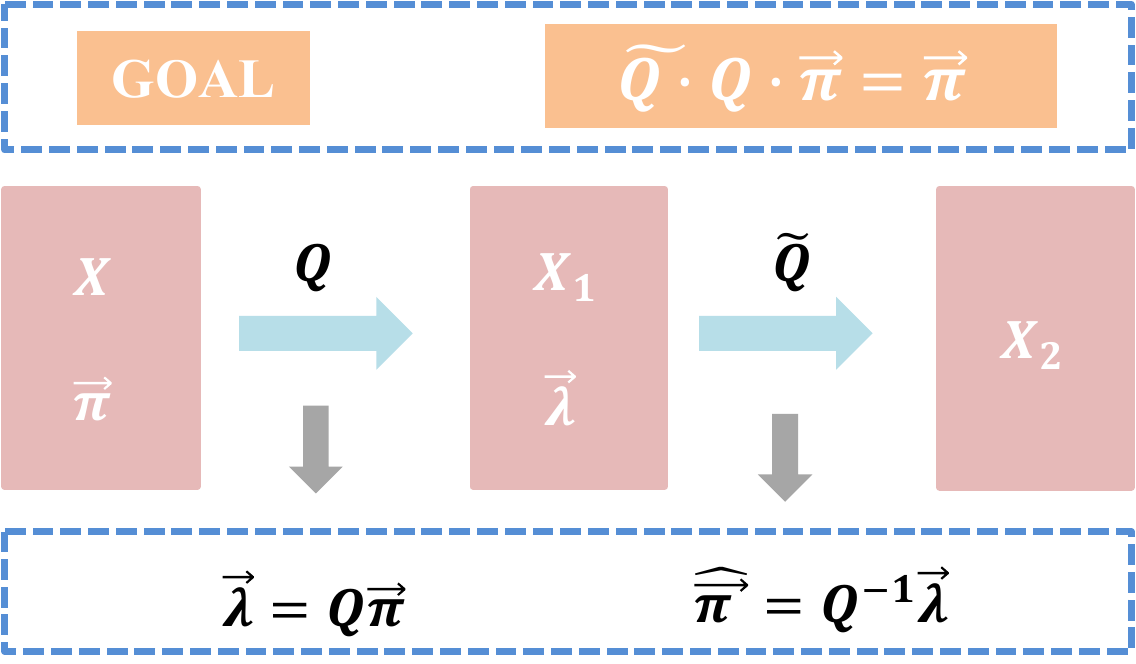}
  \caption{Process of Double Perturbation.}
   \label{fig:double}
\end{figure}

The attribute variables can be either two-valued or multivalued. For better elaboration, we detail the construction of the transition probability matrix for a two-valued attribute first and then extend the procedure to multivalued attributes. Following that, we apply the construction to compound variables and present the analysis on the privacy guarantee.

\subsubsection{Two-Valued Attributes}\label{sec:two-valued}

We start from the case of a categorical attribute variable $X$ with only two possible values of $c_1$ and $c_2$. Let $\pi_{1}=Pr[X=c_1]$ and $\pi_{2}=Pr[X=c_2]$; denoted by $\vec \pi=(\pi_1, \pi_2)^\mathrm{T}$. The data randomization process is conducted in a way that $c_1$ or $c_2$ either remains unchanged with probability $q$, or is changed to the other value with the probability of $1-q$; that is $q_{11}=q_{22}=q$ and $q_{12}=q_{21}=1-q$. Thus, the transition probability matrix $Q$ of the two-valued variable $X$ is a $2\times 2$ matrix, which can be shown as follows:

\begin{small}
\begin{equation}\nonumber
Q=\left[ \begin{array}{cc}
q &  1-q\\
1-q & q\\
\end{array} 
\right]
\end{equation}
\end{small}

Let $\lambda_{1}=Pr[X_1=c_1]$ and $\lambda_{2}=Pr[X_1=c_2]$; set $\vec\lambda=(\lambda_1, \lambda_2)^\mathrm{T}$. Note that the transition probability matrix satisfies $\vec\lambda=Q\vec\pi$.

In this setting, let $q=\frac{e^{\epsilon}}{1+e^{\epsilon}}$. Then the local differential privacy can be satisfied as:

\begin{small}
\begin{equation}\nonumber
\frac{Pr(X_1=c_1|X=c_1)}{Pr(X_1=c_1|X=c_2)}\leq\frac{q}{1-q}\leq e^{\epsilon}
\end{equation}
\end{small}


As mentioned earlier, we propose our invariant post randomization method to preserve the statistical information and data utility to the greatest possible extent, which first adopts transition probability matrix $Q$ on the original attribute variable $X$, estimates the probability distribution of the variable $X$ based on the perturbed data after the first perturbation, and constructs the transition probability matrix $\tilde{Q}$ for the second perturbation according to the transition probability matrix $Q$ and the probability distribution of the perturbed data $X_1$. The advantage of this double-perturbation mechanism lies in that there is no need to know the probability distribution of the original data in advance--we actually do not know the probability distribution of $X$--the transition probability matrix of the original data is thus constructed adaptively.
After obtaining the perturbed data $X_1$ with $\vec\lambda$, we can obtain the estimate of the original attribute variable distribution:

\begin{small}
\begin{equation}\label{est}
\hat{\vec\pi}=Q^{-1}\vec\lambda
\end{equation}
\end{small}
Then we compute the transition probability of each variable for the second perturbation as follows:

\begin{small}
\begin{equation}\nonumber
\tilde{q}_{11}=Pr(X=c_1|X_1=c_1)=\frac{\hat{\pi}_1\cdot q}{q\cdot\hat{\pi}_1+(1-q)\cdot\hat{\pi}_2}
\end{equation}
\end{small}
\begin{small}
\begin{equation}\nonumber
\tilde{q}_{22}=Pr(X=c_2|X_1=c_2)=\frac{\hat{\pi}_2\cdot q}{q\cdot\hat{\pi}_2+(1-q)\cdot\hat{\pi}_1}
\end{equation}
\end{small}
\begin{small}
\begin{equation}\nonumber
\begin{aligned}
\tilde{q}_{12}=Pr(X=c_2|X_1=c_1)=\frac{\hat{\pi}_2\cdot (1-q)}{q\cdot\hat{\pi}_1+(1-q)\cdot\hat{\pi}_2}
=1-\tilde{q}_{11}
\end{aligned}
\end{equation}
\end{small}
\begin{small}
\begin{equation}\nonumber
\begin{aligned}
\tilde{q}_{21}=Pr(X=c_1|X_1=c_2)=\frac{\hat{\pi}_1\cdot (1-q)}{q\cdot\hat{\pi}_2+(1-q)\cdot\hat{\pi}_1}
=1-\tilde{q}_{22}
\end{aligned}
\end{equation}
\end{small}

Accordingly, we obtain the transition probability matrix for the second perturbation:

\begin{small}
\begin{equation}\nonumber
\tilde{Q}=\left[ \begin{array}{cc}
\tilde{q}_{11} &  \tilde{q}_{12}\\[2mm]
\tilde{q}_{21} & \tilde{q}_{22}\\
\end{array} 
\right]
=\left[ \begin{array}{ccc}
\frac{\hat{\pi}_1\cdot q}{q\cdot\hat{\pi}_1+(1-q)\cdot\hat{\pi}_2} &  \frac{(1-q)\cdot\hat{\pi}_2}{q\cdot\hat{\pi}_1+(1-q)\cdot\hat{\pi}_2}\\[2mm]
\frac{(1-q)\cdot\hat{\pi}_1}{q\cdot\hat{\pi}_2+(1-q)\cdot\hat{\pi}_1} & \frac{\hat{\pi}_2\cdot q}{q\cdot\hat{\pi}_2+(1-q)\cdot\hat{\pi}_1}\
\end{array} 
\right]
\end{equation}
\end{small}

Therefore, to obtain the invariant PRAMed data $X_2$ of the attribute variable $X$, we apply $\tilde{Q}$ to the perturbed data $X_1$ during the second perturbation. These two phases of data perturbation with $Q$ and $\tilde{Q}$ successfully realize an invariant PRAM with $P=Q\cdot \tilde{Q}$, where $\tilde{Q}$ can be considered as the inverse of $Q$ while ensuring that $\tilde{Q}$ is also a transition probability matrix.

\subsubsection{Multivalued Attributes}\label{sec:multivalued}

The perturbation of the multivalued attributes is similar to that of the two-valued one. We consider a categorical random variable with $s$ possible values $c_1, c_2\cdots, c_s$. Let $\pi_{i}=Pr[X=c_i], i=1,\cdots, s$ and $\vec \pi=(\pi_1,\cdots, \pi_s)^\mathrm{T}$. Given that $X$ belongs to category $c_i$, it either remains unchanged with probability $q_{ii}$, or is changed uniformly at random with the probability of $\frac{1-q_{ii}}{s-1}$ to one of the other $s-1$ categories. 
That is, the transition probability matrix is a $s\times s$ one, which can be formalized as follows:

\begin{small}
\begin{equation}\nonumber
Q=\left[ \begin{array}{cccc}
q_{11}& \frac{1-q_{11}}{s-1}&  \cdots&\frac{1-q_{11}}{s-1} \\[1.2mm]
\frac{1-q_{22}}{s-1}& q_{22}&  \cdots&\frac{1-q_{22}}{s-1} \\[1.2mm]
\cdots& \cdots&\cdots&\cdots\\
\frac{1-q_{ss}}{s-1}& \frac{1-q_{ss}}{s-1} & \cdots&q_{ss}\\
\end{array} 
\right]
\end{equation}
\end{small}

To satisfy local differential privacy, we set $q_{11}=q_{22}=\cdots=q_{ss}=\frac{e^{\epsilon}}{s-1+e^{\epsilon}}$; then $Q$ can be denoted as:
\begin{small}
\begin{equation}\nonumber
Q=\left[ \begin{array}{cccc}
\frac{e^{\epsilon}}{s-1+e^{\epsilon}}& \frac{1}{s-1+e^{\epsilon}} & \cdots&\frac{1}{s-1+e^{\epsilon}} \\[1.2mm]
\frac{1}{s-1+e^{\epsilon}}& \frac{e^{\epsilon}}{s-1+e^{\epsilon}}&  \cdots&\frac{1}{s-1+e^{\epsilon}} \\[1.2mm]
\cdots& \cdots&\cdots&\cdots\\
\frac{1}{s-1+e^{\epsilon}}& \frac{1}{s-1+e^{\epsilon}} & \cdots&\frac{e^{\epsilon}}{s-1+e^{\epsilon}} \\
\end{array} 
\right]
\end{equation}
\end{small}
The PRAMed variable can be denoted as $_1$ after applying $Q$ to $X$. Correspondingly, let $\lambda_{i}=Pr[X_1=c_i], i=1,\cdots, s$, $\vec\lambda=(\lambda_1,\cdots, \lambda_s)^\mathrm{T}$. In this setting, the local differential privacy condition can be satisfied as:

\begin{small}
\begin{equation}\nonumber
\frac{Pr(X_1=c_i|X=c_i)}{Pr(X_1=c_i|X'\neq c_i)}\leq\frac{q_{ii}}{q_{ji}(j\neq i)}\leq e^{\epsilon}
\end{equation}
\end{small}
We can compute the estimated $\hat{\vec\pi}$ of the original attribute variable $X$ as 

\begin{small}
\begin{equation}\nonumber
\hat{\vec\pi}=Q^{-1}\cdot \vec\lambda
\end{equation}
\end{small}
Then the elements of the transition probability matrix $\tilde{Q}$ for the second perturbation can be computed as:

\begin{small}
\begin{equation}\nonumber
\begin{aligned}
\tilde{q}_{ij}=Pr(X=c_j|X_1=c_i)=\frac{\hat{\pi}_j\cdot q_{ji}}{\sum\limits_{k=1}^{s}\hat{\pi}_k\cdot q_{ki}}\\
\end{aligned}
\end{equation}
\end{small}
It can be observed that $\sum\limits_{j=1}^{s}\tilde{q}_{ij}=1$, which satisfies the property of a transition probability matrix. We take $\sum\limits_{j=1}^{s}\tilde{q}_{1j}$ as an example:
\begin{small}
\begin{equation}\nonumber
\begin{aligned}
&\sum\limits_{j=1}^{s}\tilde{q}_{1j}=\tilde{q}_{11}+\tilde{q}_{12}+ \cdots+\tilde{q}_{1s}\\[1.2mm]
&=\resizebox{0.96\hsize}{!}{$Pr(X=c_1|X_1=c_1)+Pr(X=c_2|X_1=c_1)+\cdots\!+\!Pr(X=c_s|X_1=c_1)$}\\[1.2mm]
&=\frac{\hat{\pi}_1\cdot q_{11}}{\sum\limits_{k=1}^{s}\hat{\pi}_k\cdot q_{k1}}+
\frac{\hat{\pi}_2\cdot q_{21}}{\sum\limits_{k=1}^{s}\hat{\pi}_k\cdot q_{k1}}+\cdots
\frac{\hat{\pi}_s\cdot q_{s1}}{\sum\limits_{k=1}^{s}\hat{\pi}_k\cdot q_{k1}}\\[1.2mm]
&=1
\end{aligned}
\end{equation}
\end{small}
After applying $\tilde{Q}$ to the perturbed data $X_1$ in the second perturbation, we obtain the invariant PRAM result $X_2$ of the attribute variable $X$. 

\subsubsection{Compound Variables}\label{compound}

Since an attribute cluster may include more than one two-valued or multivalued attribute variables which are strongly correlated, one can treat all these variables as a compound one. Thus an invariant PRAM for  
compound variables \cite{TSA} is needed, which first computes the transition probability matrix for each attribute variable, then computes the transition probability matrix for the compound one. For example, for two categorical variables $X$ with $r$ categories and $Y$ with $s$ categories, we may first compute the invariant PRAM transition probability matrix of $X$ and $Y$, denoted as $E=(e_{ij})$ and $F=(f_{ij})$, respectively. Then the combination of $X$ and $Y$ can 

\begin{small}
\begin{equation}\nonumber
\begin{aligned}
&Q_{EF}=E\otimes F=\\
&\left[\begin{array}{ccccccc}
e_{11}f_{11} & e_{11}f_{12}&\cdots e_{11}f_{1s}&&\ e_{1r}f_{11}& e_{1r}f_{12}&\cdots e_{1r}f_{1s}\\

 \vdots& \vdots& \vdots& \cdots &\vdots& \vdots& \vdots\\
e_{11}f_{s1} & e_{11}f_{s2}&\cdots e_{11}f_{ss}&&\ e_{1r}f_{s1}& e_{1r}f_{s2}&\cdots e_{1r}f_{ss}\\

 &\vdots &&\ddots&& \vdots\\
e_{r1}f_{11} & e_{r1}f_{12}&\cdots e_{r1}f_{1s}&&\ e_{rr}f_{11}& e_{rr}f_{12}&\cdots e_{rr}f_{1s}\\
 \vdots& \vdots& \vdots&\cdots& \vdots& \vdots& \vdots\\\vspace{0.2cm}
e_{r1}f_{s1} & e_{r1}f_{s2}&\cdots e_{r1}f_{ss}&&\ e_{rr}f_{s1}& e_{rr}f_{s2}&\cdots e_{rr}f_{ss}\\
\end{array} 
\right]
\end{aligned}
\end{equation}
\end{small}

Similarly, when there exist three categorical variables $X, Y, Z$ with  respectively $r,s,t$ categories, 
we may first compute the invariant PRAM transition probability matrix of $X$, $Y$ and $Z$, denoted as $E=(e_{ij})$, $F=(f_{ij})$ and $G=(g_{ij})$. Then the combination of $X$, $Y$ and $Z$ can be regarded as a compound variable with $r\cdot s \cdot t$ categories, whose transition probability $Q_{EFG}$ is the Kronecker product of $E$, $F$ and $G$, which is a $(r\cdot s \cdot t)\times (r\cdot s \cdot t)$ matrix.

As mentioned earlier, when applying the proposed invariant PRAM to each cluster, we need to allocate privacy budget $PBC(CL_{i})\cdot\epsilon_{2}$ to cluster $CL_i$. If a cluster $CL_i$ includes more than one attribute variable, we first compute the transition probability matrix for each attribute variable with uniformly allocated privacy budget $\frac{PBC(CL_{i})\cdot \epsilon_{2}}{|CL_i|}$, then compute the transition probability matrix for the compound variable.

\subsection{Privacy Analysis}\label{pri}

As discussed in Section \ref{bay}, Algorithm \ref{ag1} satisfies $\epsilon_1$-differential privacy, i.e., the procedure of Bayesian network construction satisfies differential privacy. The procedure of attribute clustering just simply cluster the attributes based on  the constructed Bayesian network, which does not disclose more information. Therefore, one can say that the first phase of DP2-Pub, i.e., Bayesian network construction and attribute clustering, satisfies $\epsilon_1$-differential privacy. 

Now we analyze the second phase, i.e., the phase of data randomization. According to \cite{DAR}, differential privacy is resistant to any randomized mapping of differentially private results. More specifically, with randomized mapping, a data analyst cannot 
make the output of a differentially private algorithm $M$ less differentially private without any additional knowledge about the private dataset \cite{DAR}. That is, if an algorithm is differentially private, simply conducting randomized mapping on the output of the algorithm without any additional knowledge does not leak any extra private information, which has been proved by the following Post-Processing Proposition \cite{DAR}:

\newtheorem{mypr1}{Proposition} 
\begin{mypr1}\label{pro1}
(Post-Processing \cite{DAR}) Let $M:D\rightarrow R$ be a randomized algorithm that is $\epsilon$-differentially private. Let $f: R\rightarrow R'$ be an arbitrary randomized mapping. Then $f\circ M: D\rightarrow R'$ is $\epsilon$-differentially private.
\end{mypr1}

\begin{theorem}
The double-perturbation of Invariant PRAM satisfies $\epsilon_2$-differential privacy.
\end{theorem}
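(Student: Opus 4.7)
The plan is to establish $\epsilon_2$-differential privacy by working outward from a single attribute's double-perturbation, lifting first to a compound variable within a cluster, and finally composing across the disjoint clusters. For a single $s$-valued attribute perturbed with some per-attribute budget $\epsilon'$, the first-stage matrix $Q$ is chosen so that $q_{ii}/q_{ji}=e^{\epsilon'}$ for every $i\neq j$. Since the per-record random response induced by $Q$ acts independently on each record, for any pair of neighboring datasets $D,D'$ differing in a single record the likelihood ratio of any outcome $X_1$ factors over records and collapses to the single changed record, giving $\Pr[X_1\mid D]/\Pr[X_1\mid D']\le e^{\epsilon'}$; hence the first stage $X\mapsto X_1$ is $\epsilon'$-differentially private.

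The crucial step is to argue that the second stage $X_1\mapsto X_2$ is post-processing. Here I would emphasize that $\tilde{Q}$ is built entirely from $Q$ (fixed and public) together with $\hat{\vec\pi}=Q^{-1}\vec\lambda$, and that $\vec\lambda$ is nothing but the empirical marginal of the already-perturbed sample $X_1$. Consequently, both the derivation of $\tilde{Q}$ and the subsequent record-wise randomization producing $X_2$ form a randomized mapping $f:R\to R'$ whose only data-dependent input is $X_1$. Applying Proposition \ref{pro1} with $M(D)=X_1$ then yields that $X_2=(f\circ M)(D)$ inherits $\epsilon'$-differential privacy from the first stage, without any additional budget expenditure.

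Next I would lift the argument to a cluster $CL_i$. If $|CL_i|>1$ the compound transition matrix is the Kronecker product of the individual matrices, each built with budget $PBC(CL_i)\cdot\epsilon_2/|CL_i|$; since the per-attribute perturbations within the cluster are carried out independently on the same individuals, sequential composition of differential privacy gives a total first-stage cost of $PBC(CL_i)\cdot\epsilon_2$ for the cluster, and the post-processing argument above preserves this bound through the second stage. Finally, since the clusters partition the attribute set $\mathcal{A}$, a further sequential composition across the $t$ clusters combined with the normalization $\sum_{i=1}^{t}PBC(CL_i)=1$ that follows from \eqref{PBC} yields the overall cost $\sum_{i=1}^{t}PBC(CL_i)\cdot\epsilon_2=\epsilon_2$, as claimed.

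I expect the main obstacle to be justifying the post-processing step rigorously, because $\tilde{Q}$ is data-adaptive rather than fixed in advance, which can look worrisome at first glance. The argument must make explicit that $\tilde{Q}$ depends on the original records only through the already-released $X_1$, so that the hypothesis of Proposition \ref{pro1} is squarely met; once this is pinned down, the remainder of the proof reduces to standard composition accounting over attributes and clusters.
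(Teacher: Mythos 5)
Your proposal is correct and follows essentially the same route as the paper's proof: the first perturbation is differentially private by the choice of $Q$, the second perturbation is a randomized mapping of $X_1$ alone and hence covered by the post-processing proposition, and sequential composition over attributes and clusters (with $\sum_i PBC(CL_i)=1$) yields the total budget $\epsilon_2$. Your explicit justification that the data-adaptive $\tilde{Q}$ depends on the records only through the already-released $X_1$ is a detail the paper's own proof states only implicitly, and it is the right point to pin down.
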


\begin{proof}
The first perturbation of each attribute variable is a post randomization satisfying local differential privacy for both two-valued and multivalued variables:
\begin{small}
\begin{equation}\nonumber
\frac{Pr(X_1=c_1|X=c_1)}{Pr(X_1=c_1|X=c_2)}\leq\frac{q}{1-q}
\end{equation}
\end{small}
\begin{small}
\begin{equation}\nonumber
\frac{Pr(X_1=c_i|X=c_i)}{Pr(X_1=c_i|X'\neq c_i)}\leq\frac{q_{ii}}{q_{ji}(j\neq i)}
\end{equation}
\end{small}
of which $q$ and $q_{ii}$ are determined by the privacy budget allocated for each attribute variable in a cluster. According to Proposition \ref{pro1}, the second random perturbation of our invariant PRAM mechanism can be considered as a randomized mapping of the differentially private algorithm output, that is, a randomized mapping based post-processing of differential privacy. 

Since each cluster $CL_i$ is $PBC(CL_{i})\cdot \epsilon_{2}$-differentially private, the $t$ clusters can be regarded as a $t$-dimensional dataset achieving $\epsilon_2$-differential privacy according to the sequential composition theorem \cite{FDM}.
\end{proof}

Accordingly, one can obtain the following theorem.
\begin{theorem}
The DP2-Pub satisfies $(\epsilon_1+\epsilon_2)$-differential privacy according to sequential composition theorem \cite{FDM}.
\end{theorem}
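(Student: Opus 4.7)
The plan is to view DP2-Pub as the sequential composition of two mechanisms $M_1$ and $M_2$, both operating on the input dataset $D$, and then invoke the standard sequential composition theorem. Let $M_1$ denote the first phase (differentially private Bayesian network construction followed by the deterministic Markov-blanket attribute clustering), whose output is the partition $\{CL_1,\dots,CL_t\}$ together with the associated CIF/PBC weights derived from the Bayesian network. Let $M_2$ denote the second phase, i.e., the double-perturbation invariant PRAM applied cluster-by-cluster with per-cluster budget $PBC(CL_i)\cdot\epsilon_2$, whose input is $D$ together with the output of $M_1$ (the partition is needed to decide which attributes to bundle into compound variables).

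First, I would cite the discussion in Section \ref{bay} and Section \ref{att}: Algorithm \ref{ag1} is $\epsilon_1$-DP by the composition of $d-1$ invocations of the exponential mechanism with per-invocation budget $\epsilon_1/(d-1)$; Algorithm \ref{ag2} is a deterministic post-processing of $\mathcal{N}$, and similarly the CIF/PBC computation is a deterministic post-processing of $\mathcal{N}$, so by Proposition \ref{pro1} neither step consumes additional budget. Hence $M_1$ is $\epsilon_1$-differentially private. Next, I would invoke the previous theorem, which established that the data-randomization phase, once the cluster structure is fixed, is $\epsilon_2$-differentially private: each cluster $CL_i$ consumes $PBC(CL_i)\cdot\epsilon_2$, these budgets sum to $\epsilon_2$ by the normalization in \eqref{PBC}, and the clusters are disjoint so sequential (indeed parallel) composition over clusters yields $\epsilon_2$-DP for $M_2$ given the partition.

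Finally, since both $M_1$ and $M_2$ are applied to the same dataset $D$, and the overall DP2-Pub release is the pair (or equivalently the synthetic data derived from $M_2$ using the partition produced by $M_1$), the sequential composition theorem \cite{FDM} gives that $(M_1,M_2)$ is $(\epsilon_1+\epsilon_2)$-differentially private. Because any further processing of this pair to produce the published dataset is data-independent post-processing, Proposition \ref{pro1} preserves the $(\epsilon_1+\epsilon_2)$-DP guarantee, completing the proof.

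The argument is essentially a bookkeeping exercise and I do not anticipate a genuine obstacle. The one point that deserves care is confirming that $M_2$'s privacy guarantee holds \emph{conditionally} on the partition output by $M_1$ for every realization of $M_1$'s randomness, so that the composition bound applies uniformly; this is immediate here because the PRAM transition matrices depend only on the (already-released) cluster structure and on $D$ through a fresh, independently randomized mechanism whose privacy analysis in the preceding theorem does not rely on the specific partition chosen.
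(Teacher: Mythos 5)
Your proposal is correct and follows essentially the same route as the paper, which simply combines the $\epsilon_1$-DP guarantee of the Bayesian-network/clustering phase with the $\epsilon_2$-DP guarantee of the invariant-PRAM phase via sequential composition (the paper states this without writing out a separate proof). Your added care about the clustering being post-processing of $\mathcal{N}$ and about $M_2$'s guarantee holding conditionally on every realized partition is a more explicit rendering of the same argument, not a different one.
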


\section{DP2-Pub With a Semi-honest Server}\label{sec:extension}


In many practical settings the central data server may not be trustworthy -- it is generally semi-honest, i.e., honest-but-curious, which faithfully follows the protocol but tries its best to infer as much knowledge as possible. Therefore in this section, we extend our DP2-Pub mechanism to consider a semi-honest server. A number of users generate multi-dimensional data records, then send them to a server who intends to release an approximate dataset to third-parties for various applications. Formally, each user contributes a data record constituting a dataset $D=\{U^1, U^2, \cdots U^n\}$, where $U^i$ denotes the data record of user $i$ and $n$ is the total number of records/users. 

Figure \ref{fig:distributed} illustrates the main procedure of DP2-Pub with a semi-honest server, which includes three main steps: privacy preservation of local data satisfying local differential privacy, Markov-blanket-based cluster learning based on Bayesian network, and the PRAM perturbation on the private data. Both the attribute clustering and PRAM perturbation are conducted at the data server, while the local differential privacy protection is performed by each user. Although the data server is semi-honest, it can only access the private data processed by each user. 

\begin{figure}[!htb]
 \centering
  \includegraphics[width=0.4\textwidth]{./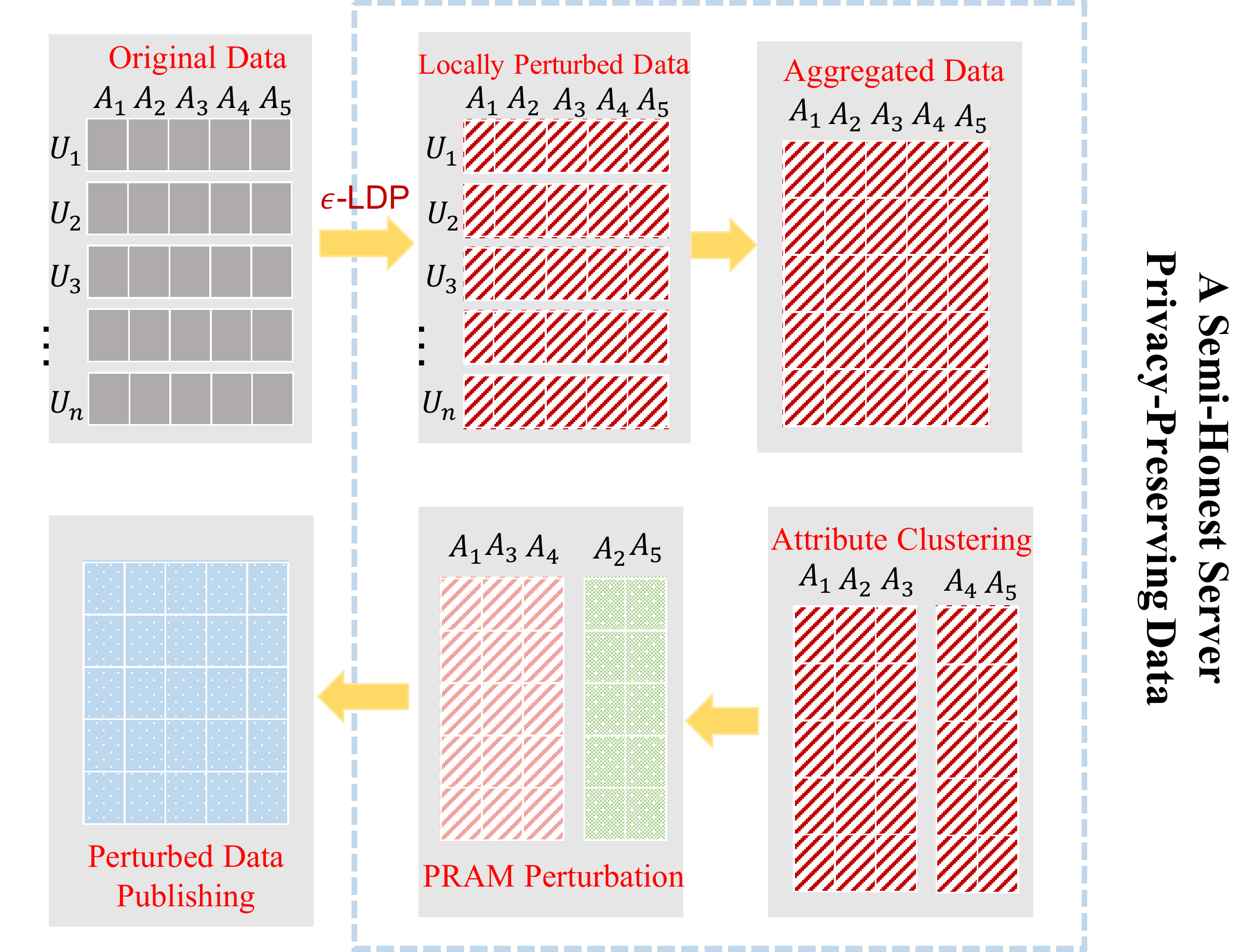}
  \caption{Overview of DP2-Pub with a Semi-honest Server.}
   \label{fig:distributed}
\end{figure}

We first propose a local randomization using RR on each user's data making it satisfy LDP, then the sanitized data is sent to and aggregated at the central server. Each user $i$ has a $d$-dimensional data record $U^i=[u_1^i ,u_2^i, \cdots u_d^i]$, and the perturbation process is conducted on each dimension with the privacy budget $\epsilon'=\epsilon/d$.

If $u_j^i$ is the value of a two-valued attribute, it is randomly flipped according to the following rule in RR:
\begin{small}
\begin{equation}\nonumber
u_j^i=\left\{
             \begin{array}{lr}
             u_j^i,&\mbox{with probability of $q=\frac{e^{\epsilon'}}{1+e^{\epsilon'}}$} \\
            
             1-u_j^i, &\mbox{with probability of $\frac{1}{1+e^{\epsilon'}}$} 
             \end{array}
\right.
\end{equation}
\end{small}

If $u_j^i$ is the value of a multivalued attribute with $s$ possible values $c_1, c_2, \cdots c_s$, it is randomly flipped according to the following rule in RR:
\begin{small}
\begin{equation}\nonumber
u_j^i=\left\{
             \begin{array}{lr}
             u_j^i,& \mbox{with probability of $q_{ss}=\frac{e^{\epsilon_i}}{s-1+e^{\epsilon'}}$}\\ [2mm]
  
             c_k\neq u_j^i, &\mbox{with probability of $\frac{1}{s-1+e^{\epsilon'}}$} 
             \end{array}
\right.
\end{equation}
of which $k=1,2,\cdots,s$.
\end{small}

After receiving the noisy data from each user, the server computes the marginal probability distribution $\vec{\lambda}$, estimates the original distribution $\hat{\vec{\pi}}$, and then calculates $Q$  for each attribute variable according to the methods presented in Sections \ref{sec:two-valued} and \ref{sec:multivalued}. Then it constructs a Bayesian network and conducts attribute clustering on the aggregated data to learn the correlations between different attribute variables. The processes of Bayesian network construction and attribute clustering are similar to those in Sections \ref{bay} and  \ref{att} except for a few minor changes: replace Line $6$ of Algorithm \ref{ag1} with a procedure that selects $(A_i,\Pi_i)$ with the largest $I(A,\Pi)$, since the process of the Bayesian network construction does not need to satisfy differential privacy, as the aggregated data at the server is already differentially private (guaranteed by local differential privacy). To further improve accuracy, Line $5$ of Algorithm \ref{ag2} can be replaced by ``Select the attribute $x$ with the maximal entropy in $S$''.

Next the server calculates $\tilde{Q}$ for each attribute variable based on $\hat{\vec{\pi}}$ and $Q$.
Then the server conducts the randomized perturbation by applying $\tilde{Q}$ on the aggregated data  to achieve invariant PRAM. According to the attribute clustering, each cluster may include more than one attribute variable which are strongly correlated. Thus we compute the transition probability matrix of the compound variable following the procedure presented in Section \ref{compound}.


\begin{theorem}
The DP2-Pub with a semi-honest server satisfies $\epsilon$-local differential privacy.
\end{theorem}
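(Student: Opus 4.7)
The plan is to decompose the mechanism into two logically separable pieces: (i) the local randomization performed by each user on its own record, and (ii) all subsequent server-side computations (Bayesian network construction, attribute clustering, estimation of $\hat{\vec{\pi}}$, derivation of $\tilde{Q}$, and the PRAM perturbation). I would argue that piece (i) already satisfies $\epsilon$-LDP by itself, and piece (ii) is simply post-processing of piece (i) and therefore cannot degrade the privacy guarantee.

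First, I would verify that the single-dimension RR performed at the user end satisfies $\epsilon'$-LDP with $\epsilon'=\epsilon/d$. For a two-valued attribute, the worst-case likelihood ratio between outputs under two different inputs is $q/(1-q) = e^{\epsilon'}$, directly from the choice $q = e^{\epsilon'}/(1+e^{\epsilon'})$. For an $s$-valued attribute, the analogous worst-case ratio is $q_{ss}/q_{ji} = e^{\epsilon'}/1 = e^{\epsilon'}$, from $q_{ss} = e^{\epsilon'}/(s-1+e^{\epsilon'})$ and the uniform off-diagonal values $1/(s-1+e^{\epsilon'})$. This establishes $\epsilon'$-LDP per dimension.

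Second, I would appeal to sequential composition for local differential privacy across the $d$ attributes of a single user's record. Because the $d$ dimensions are perturbed independently and the mechanism acts on each user's record in isolation, the joint per-user mechanism satisfies $d\cdot\epsilon' = \epsilon$-LDP, which is exactly the claim restricted to the user-side sanitization step.

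Third, I would invoke Proposition~\ref{pro1} (post-processing) to handle everything the server does. The Bayesian network construction, Markov-blanket-based clustering, the estimation $\hat{\vec{\pi}} = Q^{-1}\vec{\lambda}$, the construction of $\tilde{Q}$, and the application of $\tilde{Q}$ to produce the published records are all randomized mappings whose inputs are only the already-sanitized records delivered by the users; no step accesses the raw $u_j^i$ values. Hence the composition of the user-side mechanism with any such server-side mapping remains $\epsilon$-LDP. The only subtle point worth spelling out is that, unlike the trusted-server setting of Section~\ref{sec:our}, the server here does \emph{not} need any additional privacy budget for Bayesian network construction (justifying the replacement of Line~6 of Algorithm~\ref{ag1} by a deterministic argmax) nor for the second PRAM perturbation, precisely because it operates entirely on privatized data. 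I expect the main conceptual hurdle to be making this post-processing argument cleanly at the per-user granularity that LDP requires, rather than at the whole-dataset granularity that ordinary DP would use; this is handled by noting that LDP is a property of the channel from each individual record to the released output, and that channel factors as user-side RR composed with a fixed (data-independent conditional on the privatized inputs) server-side map.
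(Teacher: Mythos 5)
Your proposal is correct and follows essentially the same route as the paper's own proof: user-side randomized response guarantees $\epsilon$-LDP, and all server-side operations (Bayesian network construction, clustering, and the second PRAM perturbation) are treated as post-processing of the already-privatized records. You fill in details the paper leaves implicit --- the per-dimension ratio bound, the sequential composition across the $d$ attributes giving $d\cdot(\epsilon/d)=\epsilon$, and the per-record granularity of the post-processing argument --- but the underlying decomposition and key lemma (Proposition~\ref{pro1}) are identical.
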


\begin{proof}
Each user perturbs its data record individually with the help of random response to get the privatized data, which provides local differential privacy. The operations of the server are all conducted on the privacy-preserved data, and the PRAM perturbation can be considered as a randomized mapping (post processing) without breaking differential privacy. Therefore, the DP2-Pub mechanism with a semi-honest server is differentially private with privacy budget $\epsilon$.
\end{proof}

\section{Experimental Evaluations}\label{sec:eva}

In this section, we conduct extensive experiments to demonstrate the performance of our DP2-Pub mechanism and compare it with two benchmark approaches, PrivBayes \cite{ZCPC} and DPPro \cite{XRZ}, on four real-world datasets of NLTCS \cite{nltcs}, ACS \cite{acs}, BR2000 \cite{acs} and Adult \cite{adult}. The data utility is evaluated and analyzed from two aspects, namely the total variation distance between the original dataset and the perturbed dataset, and the classification error rate of the SVM classification on the perturbed datasets.

\subsection{Experimental Settings}\label{sec:datasets}

\subsubsection{Datasets} We make use of four real-world datasets in our experiments: NLTCS \cite{nltcs} consists of records of $21574$ individuals participated in the National Long Term Care Survey, and each record has 16 attributes; ACS \cite{acs} includes $47461$ records of personal information from the $2013$ and $2014$ ACS sample sets in IPUMS-USA, where each record has $23$ attributes; BR2000 \cite{acs} consists of $38000$ census records with $14$ attributes collected from Brazil in the year 2000; and Adult \cite{adult} contains personal information such as gender, salary, and education level of $45222$ records extracted from the 1994 US Census, where each record has 15 attributes. The first two datasets only contain binary attribute values while the last two possess continuous as well as categorical attributes with multiple values. We summarize the statistics of these datasets in Table \ref{tab:data}. 

\begin{table}[!htb]
\renewcommand{\arraystretch}{1.0}
\caption{Data Statistics} 
\centering
\scalebox{0.8}{
\begin{tabular}{|c|c|c|c|}
\hline
\textbf{Dataset}& \textbf{Cardinality}   &    \textbf{Dimensionality}         &   \textbf{Domain size} \\    
\hline 
\hline                    
NLTCS &  $21574$ &  $16$ & $\approx 2^{16}$   \\
\hline
ACS &  $ 47461$ &  $23$ & $\approx 2^{23}$ \\
\hline
BR2000 &  $ 38000$ &  $14$  & $\approx 2^{32} $ \\
\hline
Adult &  $ 45222$ &  $15$ & $\approx 2^{52}$ \\                      
\hline
\end{tabular}}
\label{tab:data}
\end{table}

\subsubsection{Evaluation Metrics} We consider two tasks to evaluate the performance of DP2-Pub. The first task is to study the accuracy of $\alpha$-way marginals of the perturbed datasets. We evaluate the $\alpha$-way marginals of the four datasets by adopting the \emph{total variation distance} \cite{TSAB}  between the noisy marginal distribution and that of the original datasets, which is shown in Eq.~\eqref{tvd}. 
\begin{small}
\begin{equation}\label{tvd}
\begin{aligned}
AVD(X,Z)=\frac{1}{2}\|P_X-P_Z\|_1=\frac{1}{2}\sum_{w\in \Omega}|P_X(w)-P_Z(w)|
\end{aligned}
\end{equation}
\end{small}
where $\Omega$ is the domain of the probability variable $X$ and $Z$; $P_X$ and $P_Z$ are the probability distributions of the original attribute variable $X$ and the perturbed one $Z$, respectively.

Then we compute the average results of the total variation distance over all $\alpha$-way marginals as the final result -- a lower distance implies a better utility. More specifically, in our experiments, we evaluate the $3$-way and $4$-way marginals on binary datasets NLTCS and ACS, and $2$-way and $3$-way marginals on BR2000 and Adult, since the domain size of BR2000 and Adult are prohibitively large leading to very complex joint distributions.

The second task is to evaluate the classification results of SVM classifiers. The purpose of data publication is to conduct data analysis and data mining. We adopt SVM to evaluate the data utility from the perspective of data applications, as SVM is the most popular classification approach among various data mining techniques with powerful discriminative features both in linear and non-linear classifications \cite{SNSR}. Specifically, we train two classifiers on ACS to predict whether an individual: (1) goes to school, (2) lives in a multi-generation family; four classifiers are constructed on NLTCS to predict whether an individual: (1) is unable to go outside, (2) is unable to manage money, (3) is unable to bathe, and (4) is unable to travel; two classifiers are trained on BR2000 to predict whether an individual (1) owns a private dwelling, (2) is a Catholic; and two classifiers are trained on Adult to predict whether an individual (1) is a female, (2) makes over $50$K a year. For each classifier, we use $80\%$ of the tuples of the dataset for training and the other $20\%$ as the testing set. The prediction accuracy of each SVM classifier is measured by the \emph{misclassification rate} on the testing set.

\subsubsection{Comparison Approaches}

For the two evaluation metrics mentioned above, we compare our mechanism DP2-Pub with two existing approaches: (1) PrivBayes \cite{ZCPC}, which first constructs a Bayesian network to model the correlations among the attributes in a dataset, then injects noise into each marginal distribution in the Bayesian network to realize differential privacy, and finally constructs an approximation to the data distribution of the original dataset using the Bayesian network and the noisy marginal distributions; (2) DPPro \cite{XRZ}, which projects a $d$-dimensional vector representation of a user's attributes into a lower $d$-dimensional space by a random projection, and then adds noise to each resultant vector. Note that we choose PrivBayes and DPPro for our comparison study because the former is a benchmark solution in a way of decomposing high-dimensional data into a set of low-dimensional marginal distributions while the latter is an effective approach of random projection.

\subsubsection{Parameter Settings} In our experiments, we use DP-Pub$^{1}$ to denote the case with a trusted server and DP-Pub$^{2}$ the one with a semi-honest server. The privacy budget $\epsilon$ of DP-Pub$^{1}$ is evenly distributed to the two phases, i.e., $\epsilon_1=\epsilon_2=\frac{1}{2}\epsilon$. For DP-Pub$^{2}$, there is no need to partition the privacy budget since the data is first locally differentially privatized, i.e., the privacy budget $\epsilon$ is completely allocated to the local privacy procedure. For the parameter $k$ used in the construction of the Bayesian network, we test $k=1,2,3$. Since the time cost for larger $k$ values is typically higher, we do not try the cases of $k>3$. Based on our experiments, we observe that the influence of $k$ on the experimental results is not obvious. The reason possibly lies in that the structure of the Markov blanket can help to accurately learn the data correlations between different attributes. In the following section, we present the experimental results of $k=2$.

\subsection{Experimental Results}\label{sec:results}

In this subsection, we carry out $50$ independent runs for each of the experiments mentioned above and report the averaged results for statistical confidence. 

\subsubsection{Results on Average Variation Distance}

For the task of examining the accuracy of $\alpha$-way marginals, we compute all the $\alpha$-dimensional attribute unions and compare the averaged variation distance of PrivBayes, DPPro, DP-Pub$^{1}$ and DP-Pub$^{2}$, with a varying privacy budget $\epsilon$ from $0.2$ to $1.6$.
\begin{figure}[!htb]
\begin{minipage}[t]{0.24\textwidth}
\centering
\includegraphics[width=\textwidth]{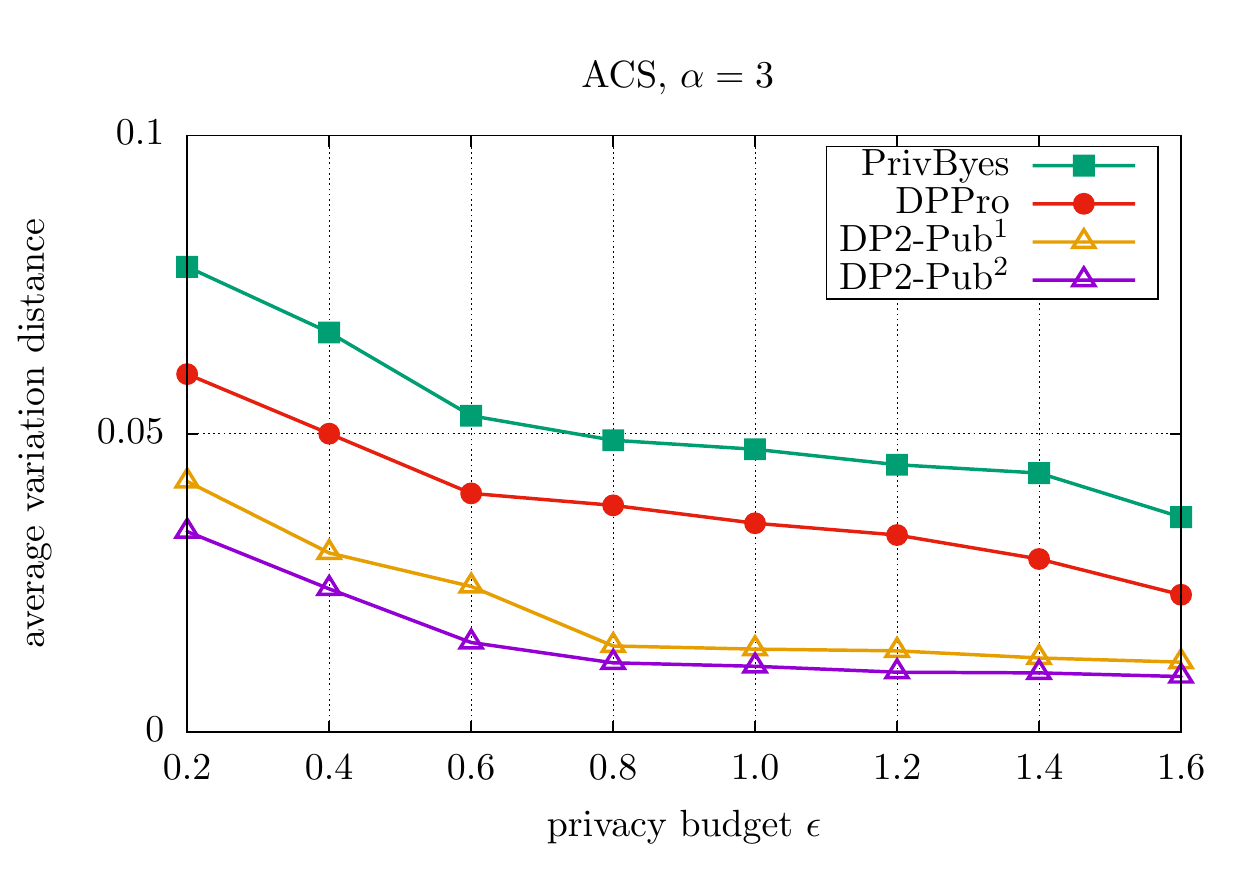}
\end{minipage}
\begin{minipage}[t]{0.24\textwidth}
\centering
\includegraphics[width=\textwidth]{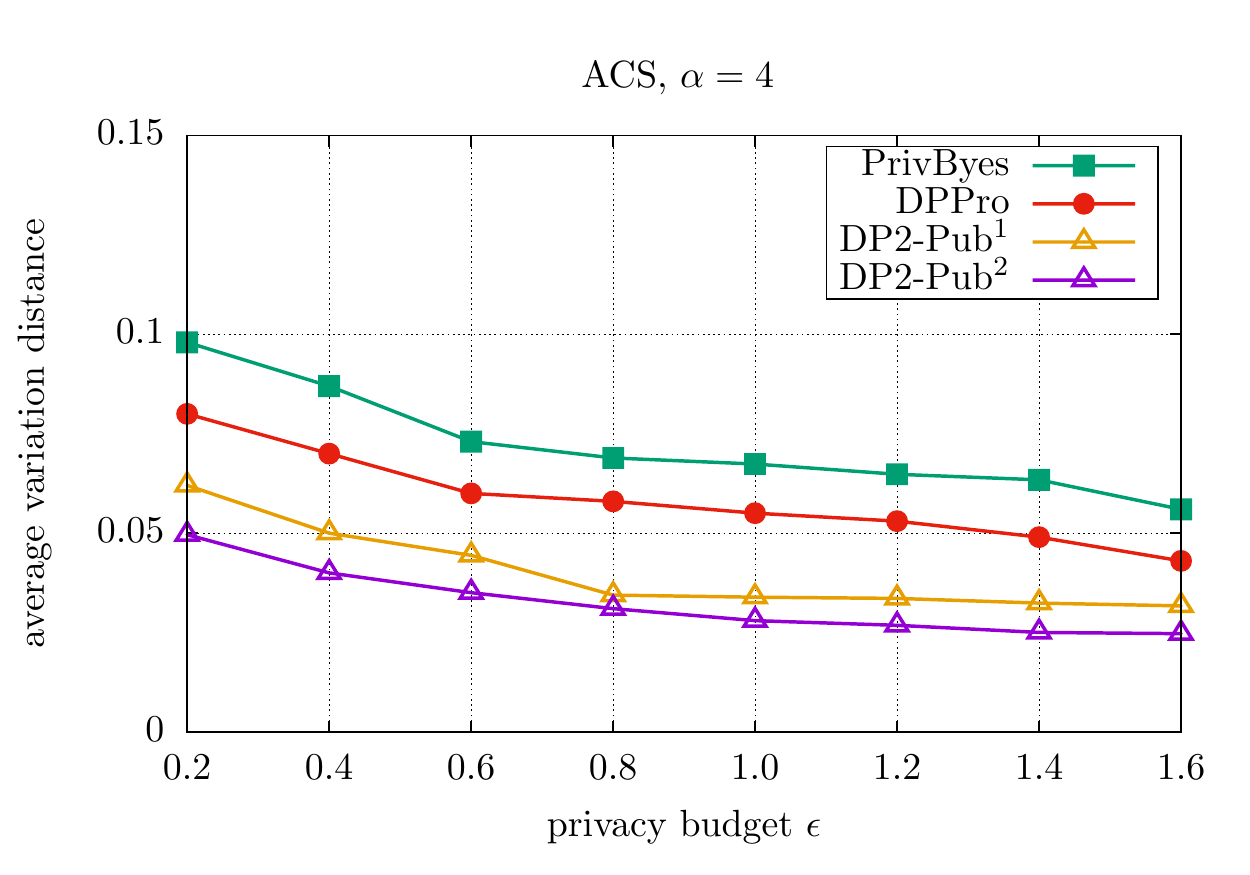}
\end{minipage}
\label{fig1}

\begin{minipage}[t]{0.24\textwidth}
\centering
\includegraphics[width=\textwidth]{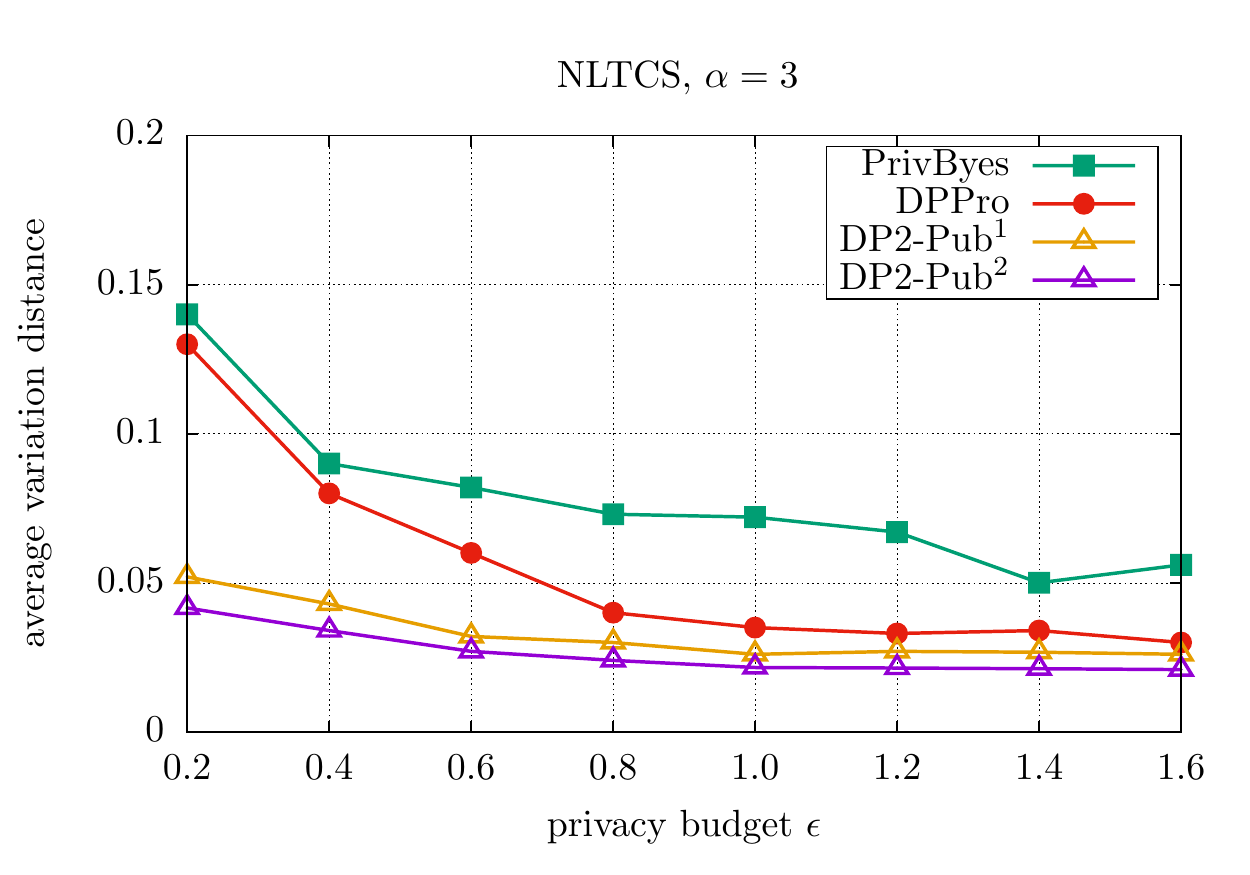}
\end{minipage}
\begin{minipage}[t]{0.24\textwidth}
\centering
\includegraphics[width=\textwidth]{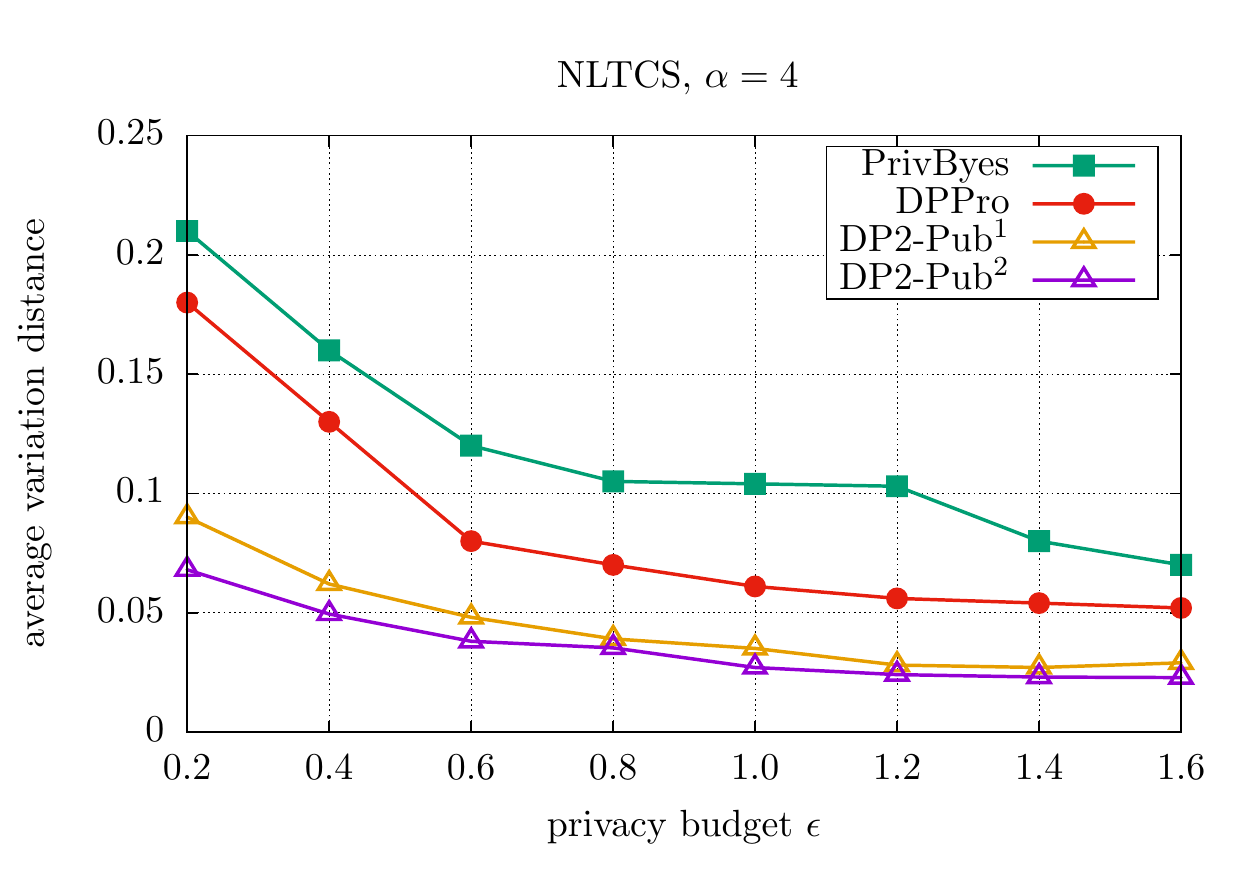}
\end{minipage}

\begin{minipage}[t]{0.24\textwidth}
\centering
\includegraphics[width=\textwidth]{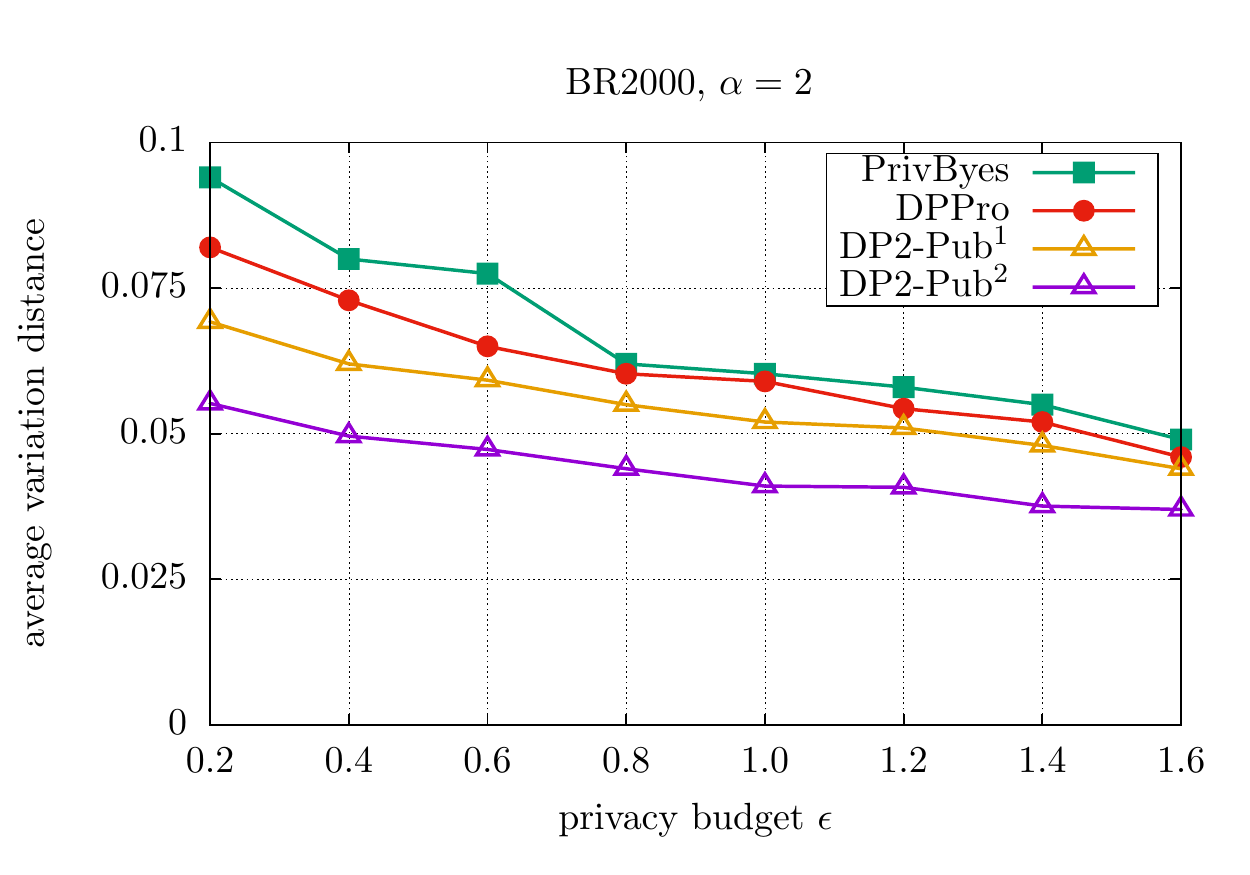}
\end{minipage}
\begin{minipage}[t]{0.24\textwidth}
\centering
\includegraphics[width=\textwidth]{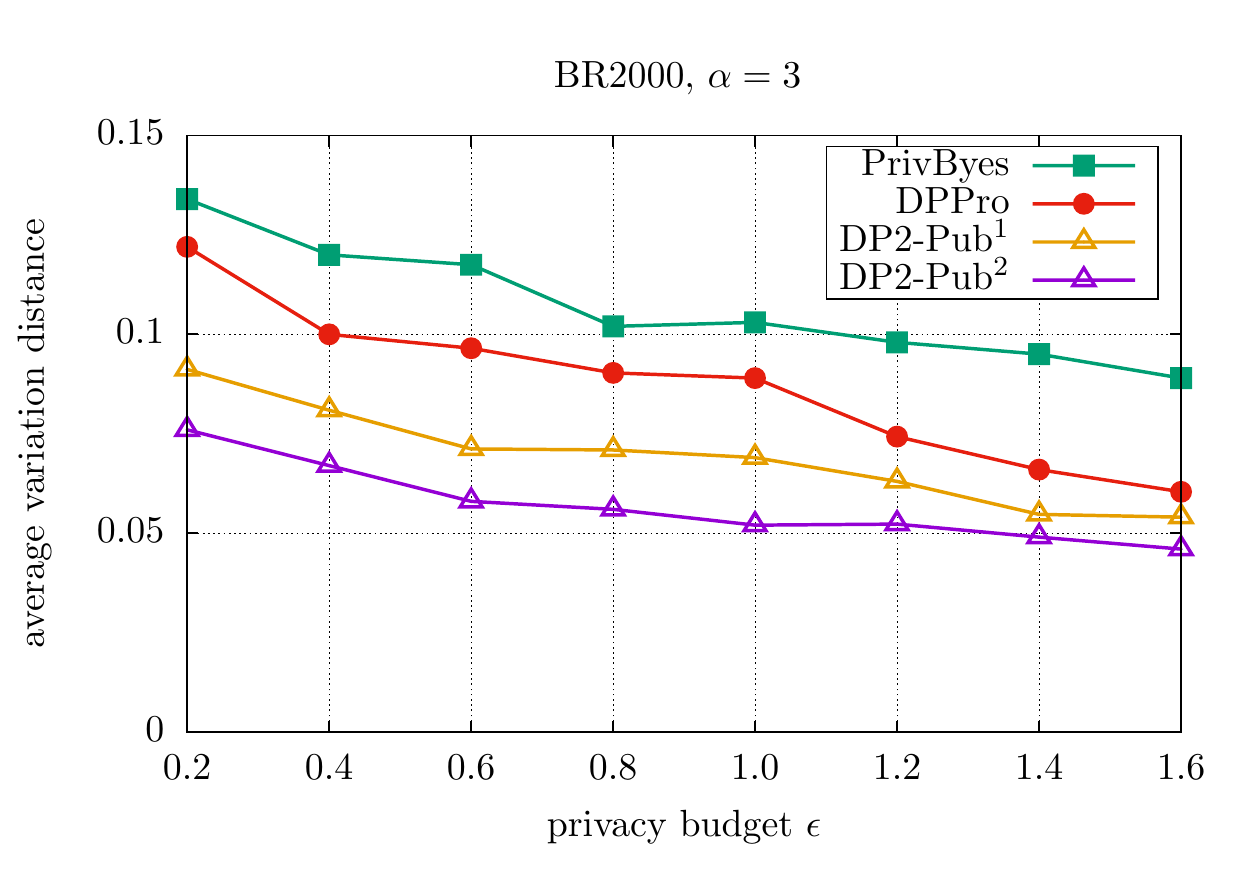}
\end{minipage}

\begin{minipage}[t]{0.24\textwidth}
\centering
\includegraphics[width=\textwidth]{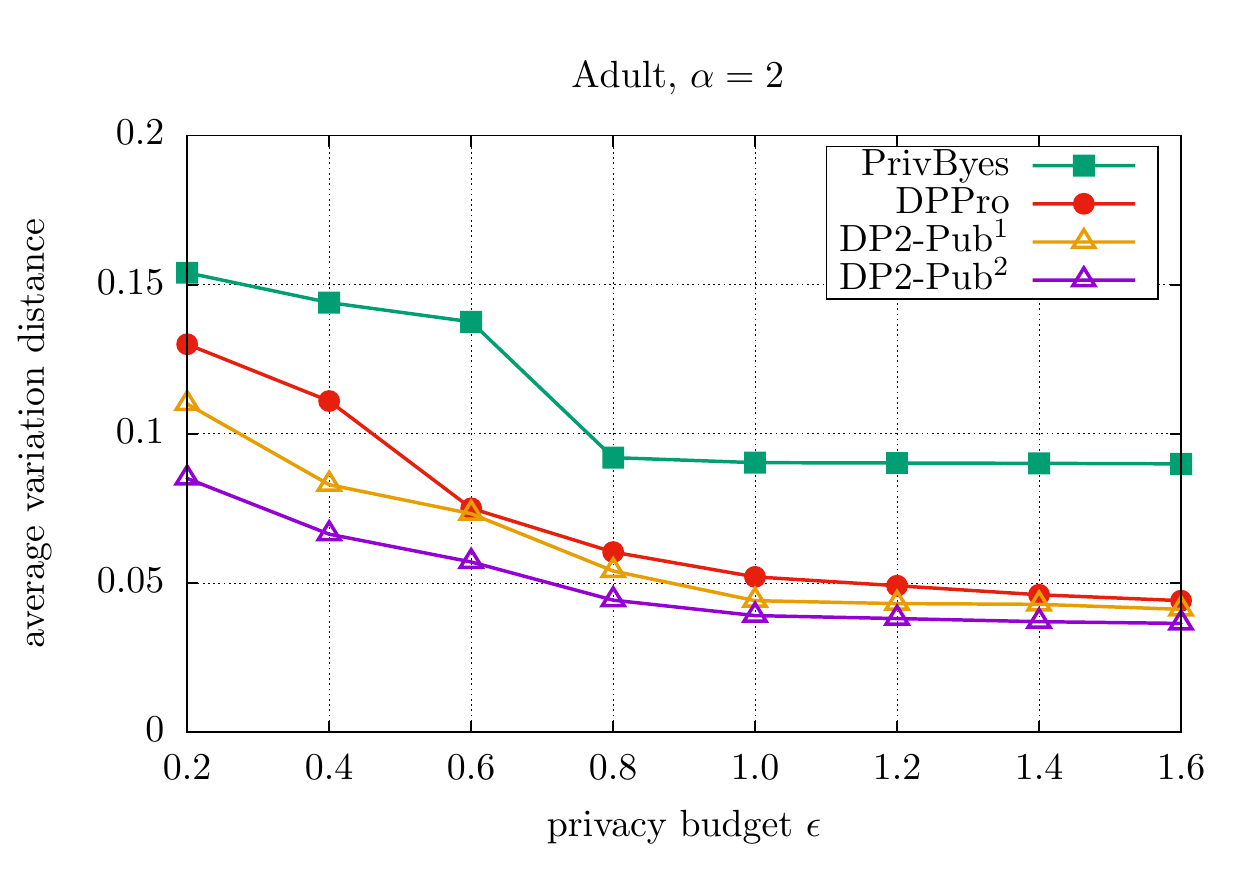}
\end{minipage}
\begin{minipage}[t]{0.24\textwidth}
\centering
\includegraphics[width=\textwidth]{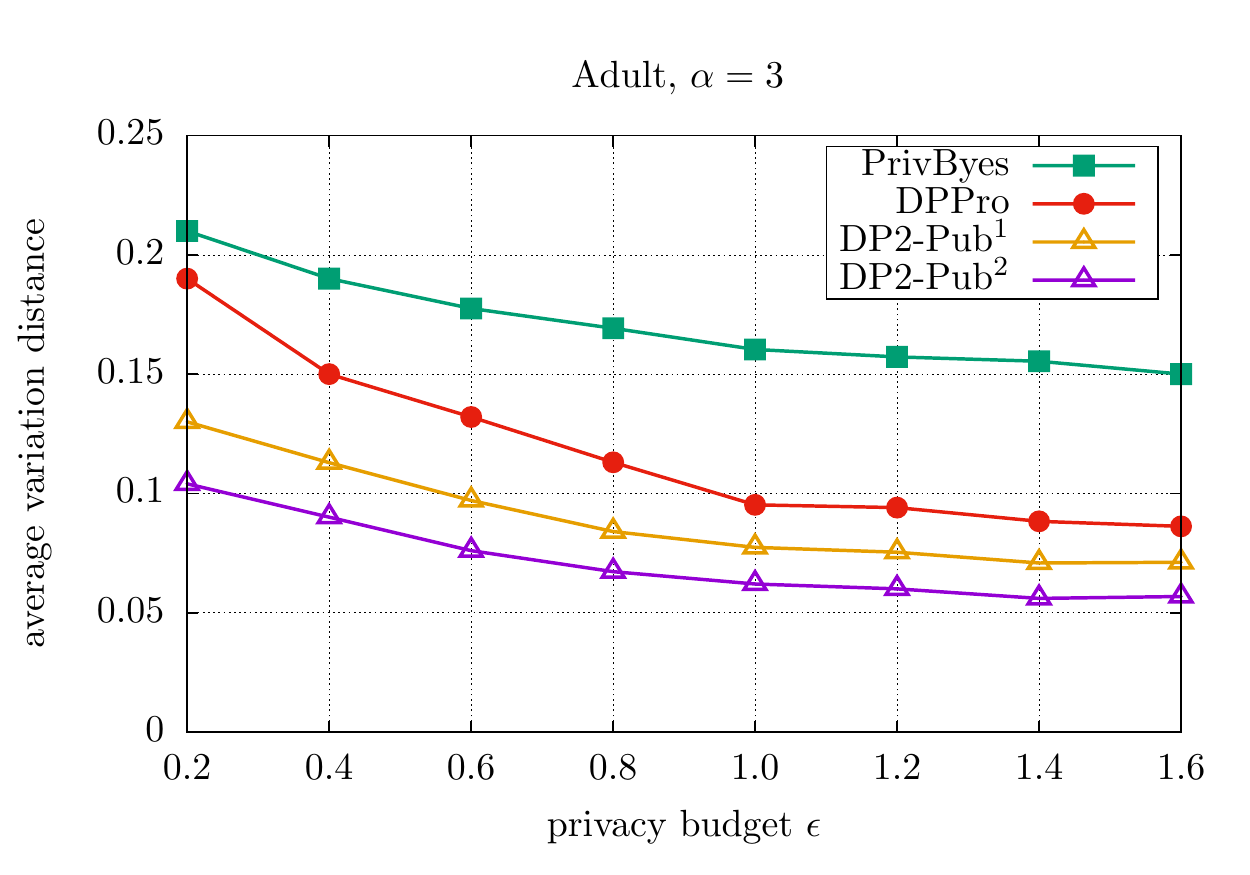}
\end{minipage}
\caption{Results of $\alpha$-way marginals with different $\epsilon$.}
\label{fig4}
\end{figure}

Figure \ref{fig4} shows the average results of the variation distance of each approach on the four datasets. From Figure \ref{fig4}, one can see that the average variation distances of these three approaches decrease when $\epsilon$ increases over the four datasets. It is obvious that when $\epsilon$ is larger, smaller noise is required, and the data utility is higher. One can also observe that our approach clearly outperforms PrivBayes and DPPro in all cases for ACS and NLTCS, while for BR2000 and Adult, the relative superiority is more pronounced when $\epsilon$ is small. There are several reasons that DP2-Pub outperforms PrivBayes and DPPro. First, PrivBayes constructs a Bayesian network to model the data correlation and generates a set of noisy conditional distributions of the original dataset. That is, for each attribute-parent pair in the Bayesian network, PrivBayes generates differentially private conditional distributions by adding Laplace noise which makes the data utility of the dataset drastically decrease. In our approach, we only utilize the Bayesian network to learn the correlations between different attributes and adopt our proposed invariant post randomization to achieve data perturbation, which ensures that there is almost no loss of statistical information. The probability distribution of each attribute variation is basically unchanged after the double-perturbation. Second, the random projection method DPPro does not consider the data characteristics and only preserves the pairwise $L_2$ distance when generating the random projection matrix, thus it may lead to relatively low utility especially when there exist data correlations between different attributes. In our approach DP2-Pub, we learn the data correlations of the original dataset and consider the importance of different attributes when allocating the privacy budget. 

DP-Pub$^{2}$ performs better than DP-Pub$^{1}$ according to the results shown in Figure \ref{fig4}. This is counter-intuitive as centralized differential privacy usually performs better than local differential privacy because centralized differential privacy adds noise based on the sensitivity of a particular query function while in local differential privacy noise is added via post randomization. But in DP-Pub$^{1}$, noise is added for differentially private Bayesian network construction and for post randomization, with none of them considering the sensitivity of a particular query function, which is more general at the cost of lower utility. Moreover, at the same budget level, adding noise at two phases increases the total amount of noise as the added noise amount is not linearly proportional to the privacy budget -- it is super-linear, which also contributes to the lower utility of DP-Pub$^{1}$.


\subsubsection{Results on SVM classification}

\begin{figure}[!htp]
\begin{minipage}[t]{0.24\textwidth}
\centering
\includegraphics[width=\textwidth]{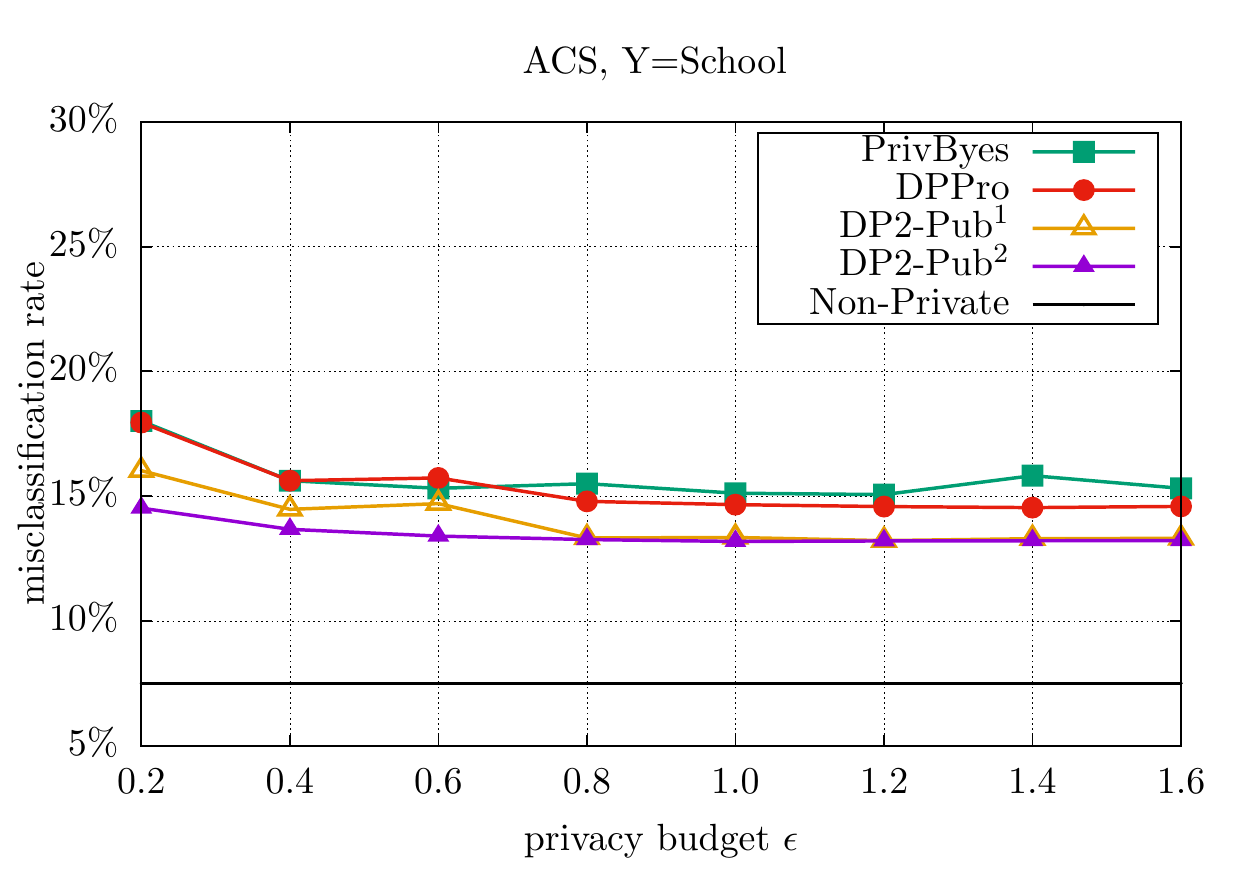}
\end{minipage}
\begin{minipage}[t]{0.24\textwidth}
\centering
\includegraphics[width=\textwidth]{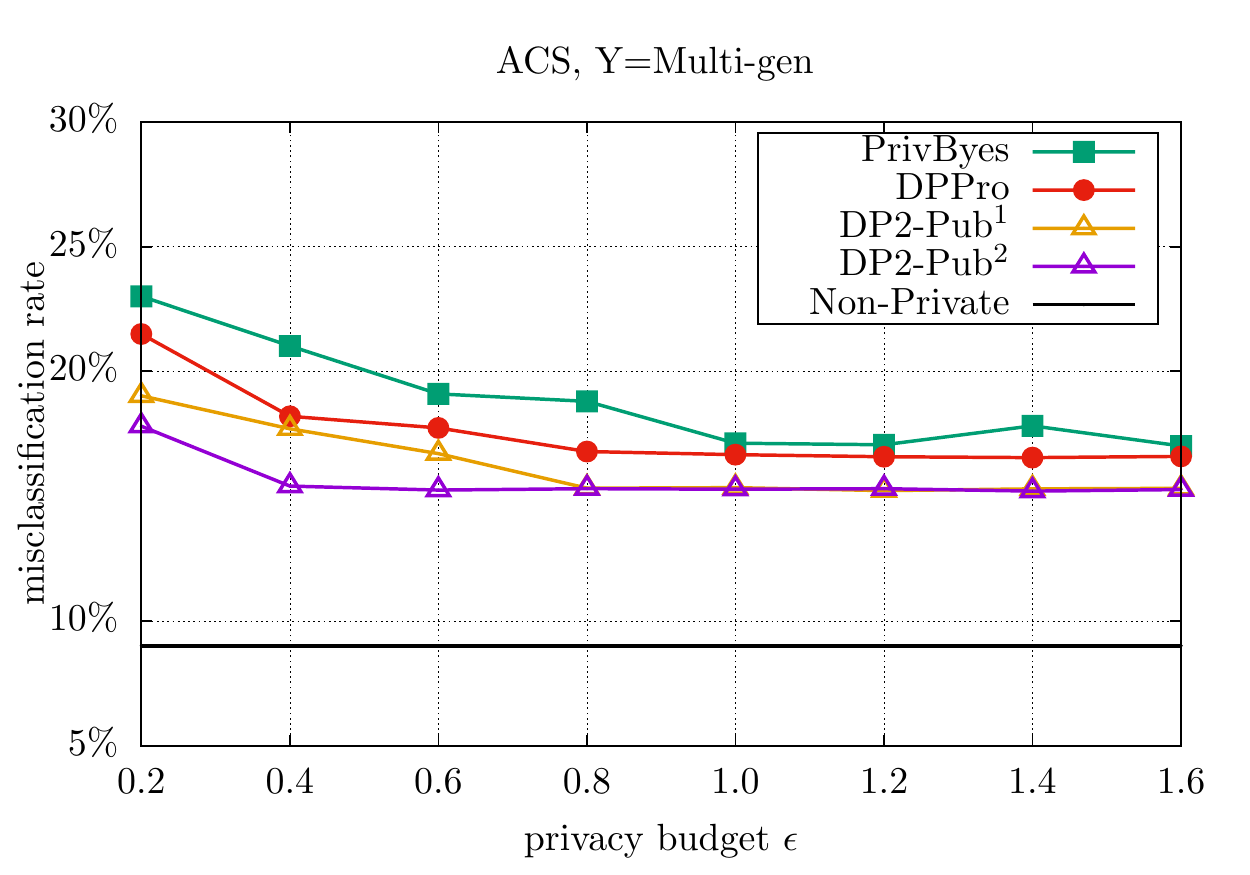}
\end{minipage}

\begin{minipage}[t]{0.24\textwidth}
\centering
\includegraphics[width=\textwidth]{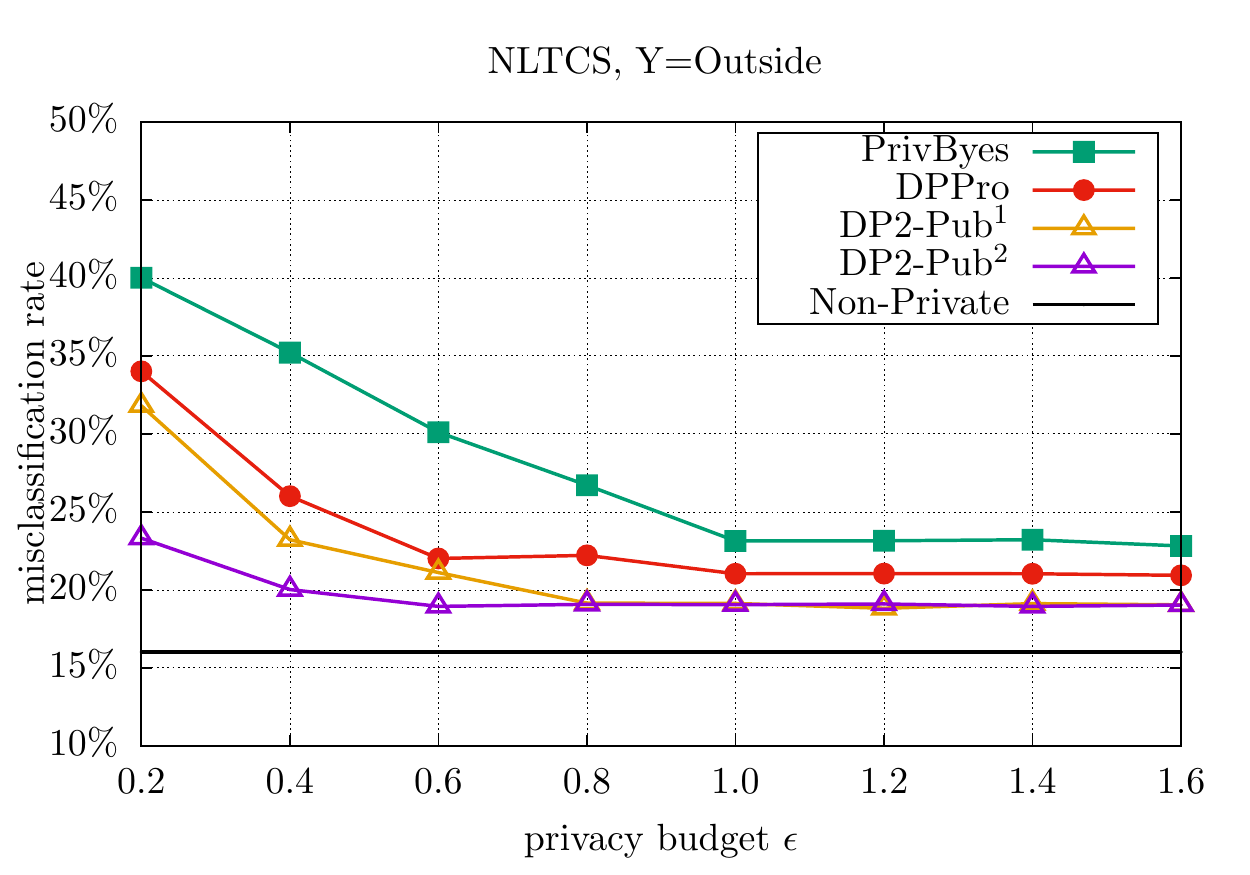}
\end{minipage}
\begin{minipage}[t]{0.24\textwidth}
\centering
\includegraphics[width=\textwidth]{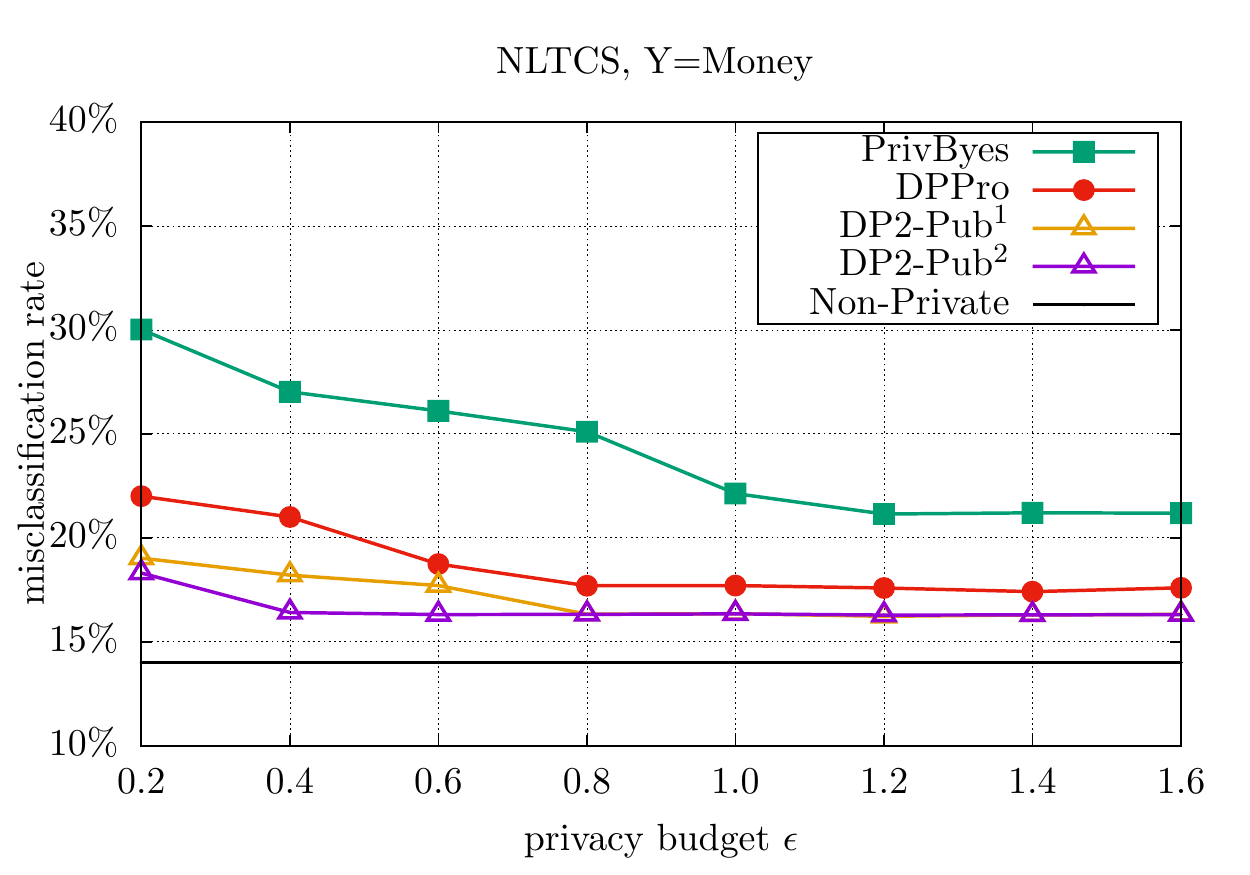}
\end{minipage}

\begin{minipage}[t]{0.24\textwidth}
\centering
\includegraphics[width=\textwidth]{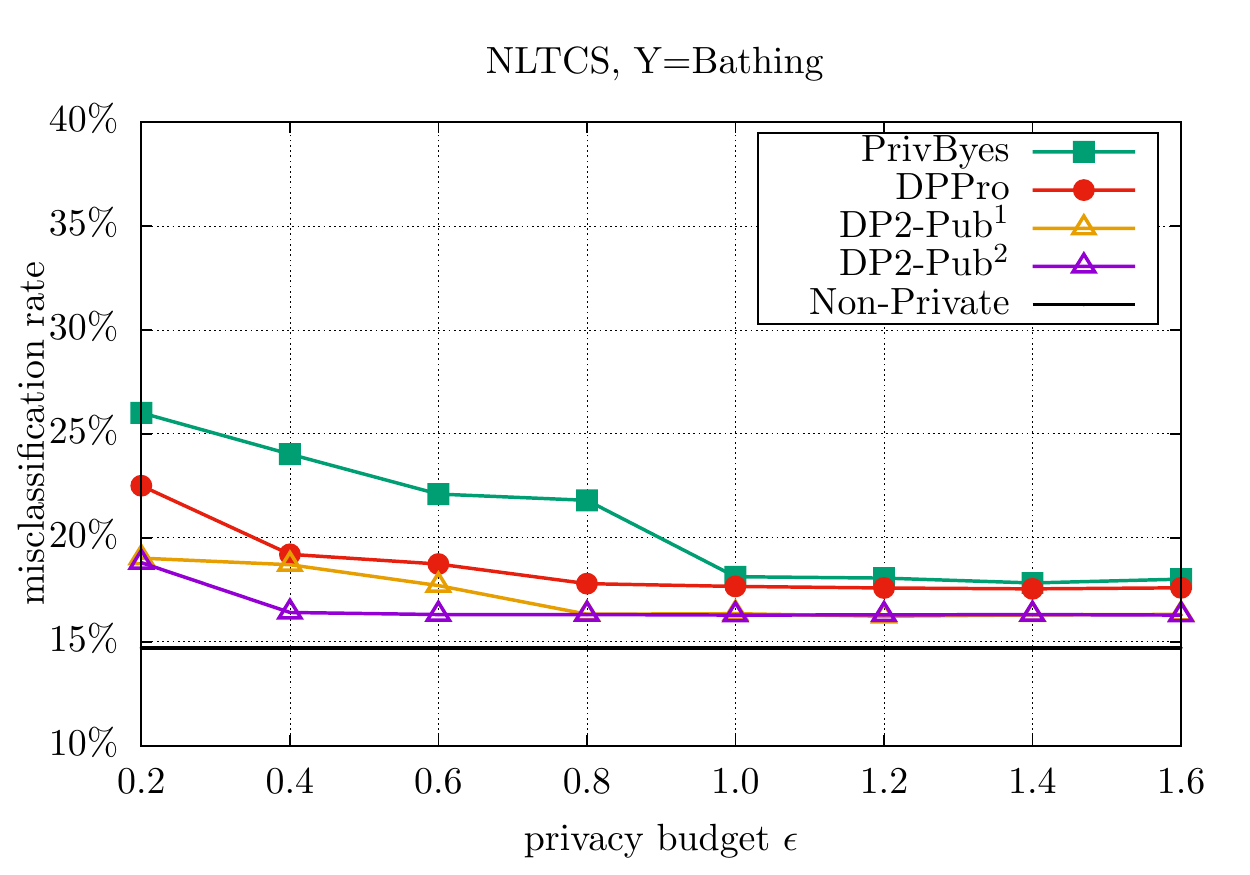}
\end{minipage}
\begin{minipage}[t]{0.24\textwidth}
\centering
\includegraphics[width=\textwidth]{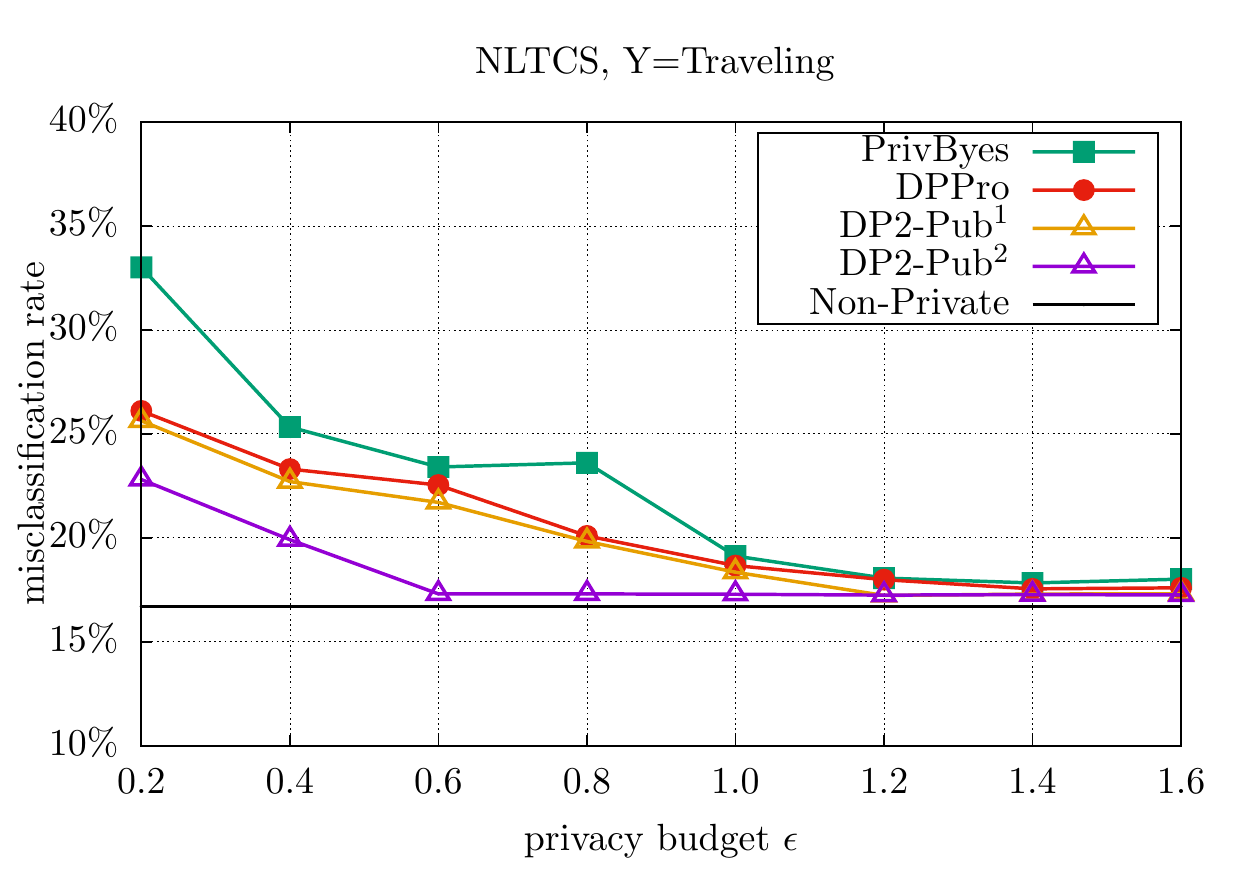}
\end{minipage}

\begin{minipage}[t]{0.24\textwidth}
\centering
\includegraphics[width=\textwidth]{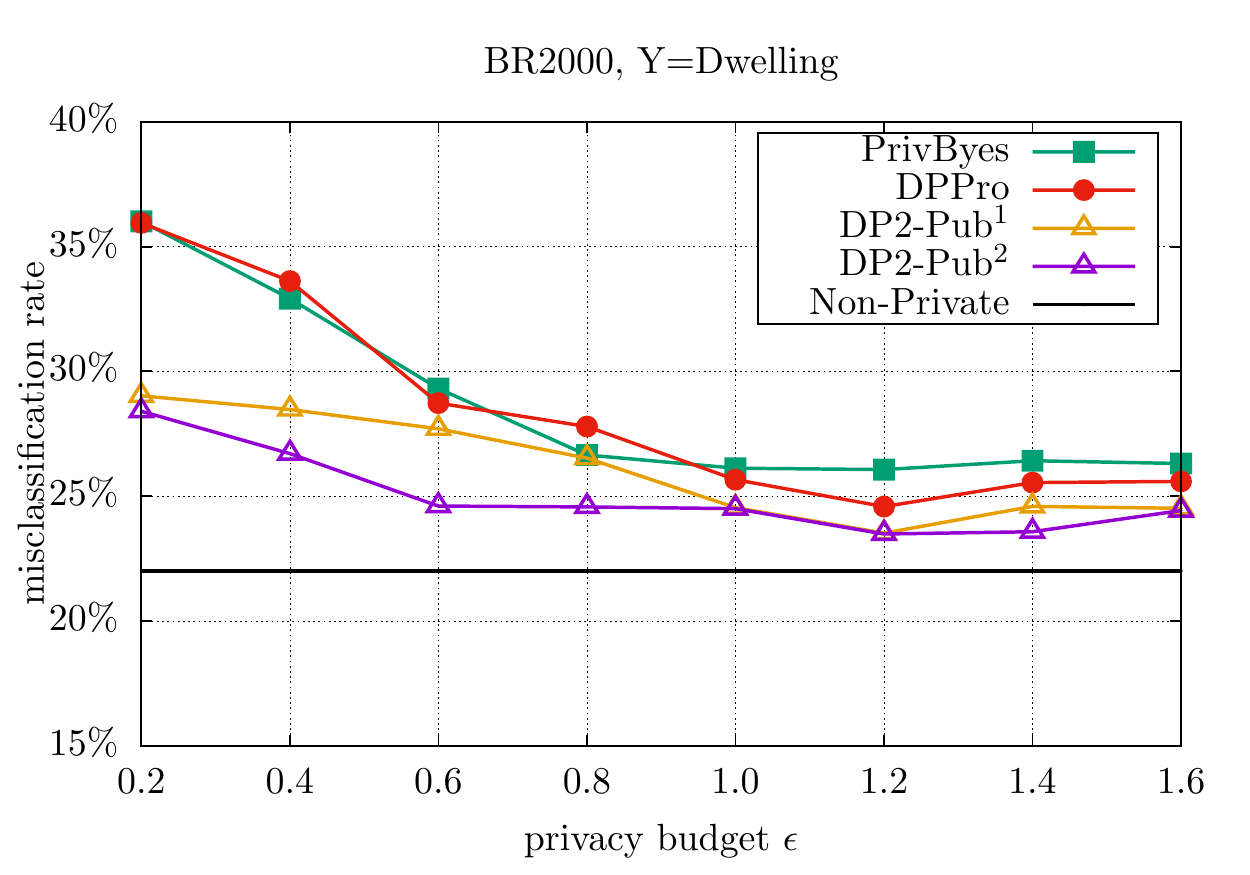}
\end{minipage}
\begin{minipage}[t]{0.24\textwidth}
\centering
\includegraphics[width=\textwidth]{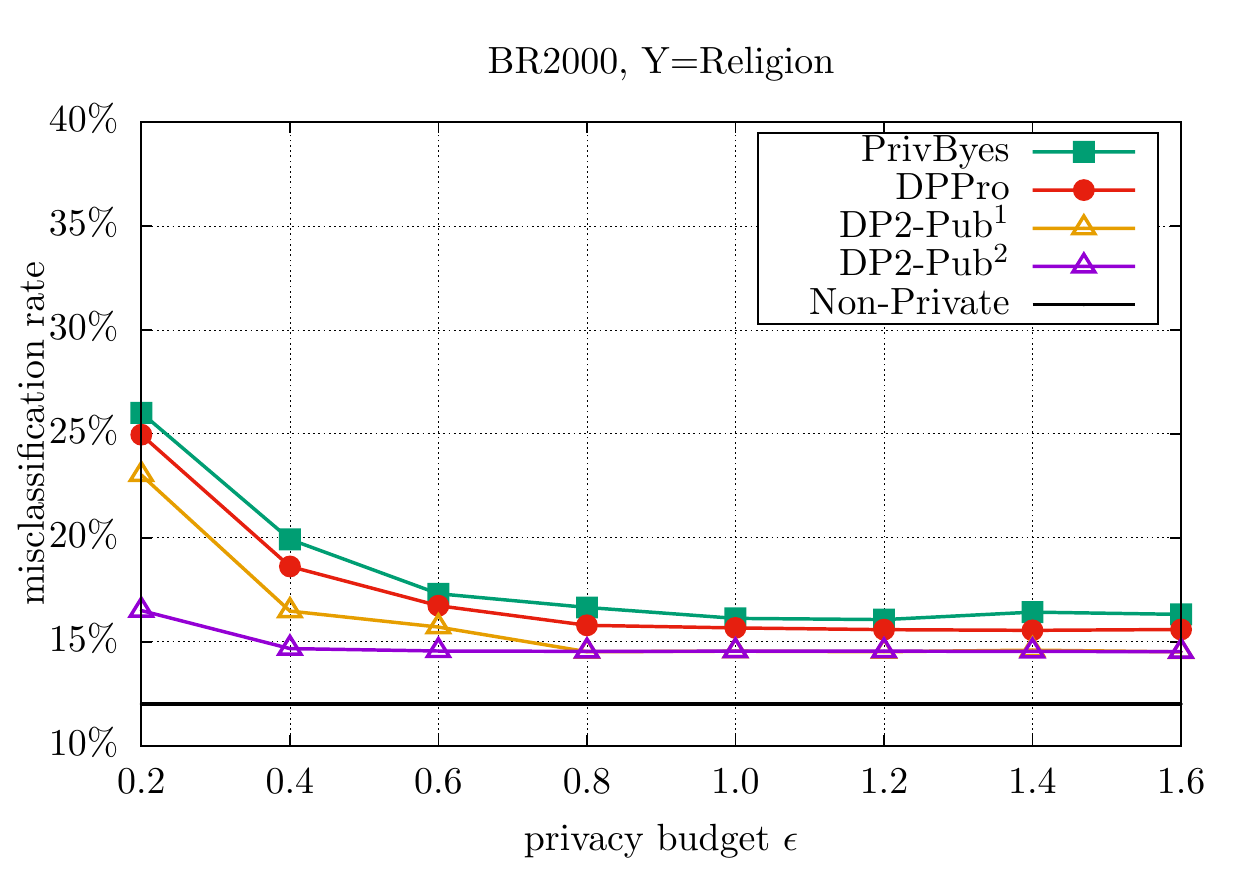}
\end{minipage}

\begin{minipage}[t]{0.24\textwidth}
\centering
\includegraphics[width=\textwidth]{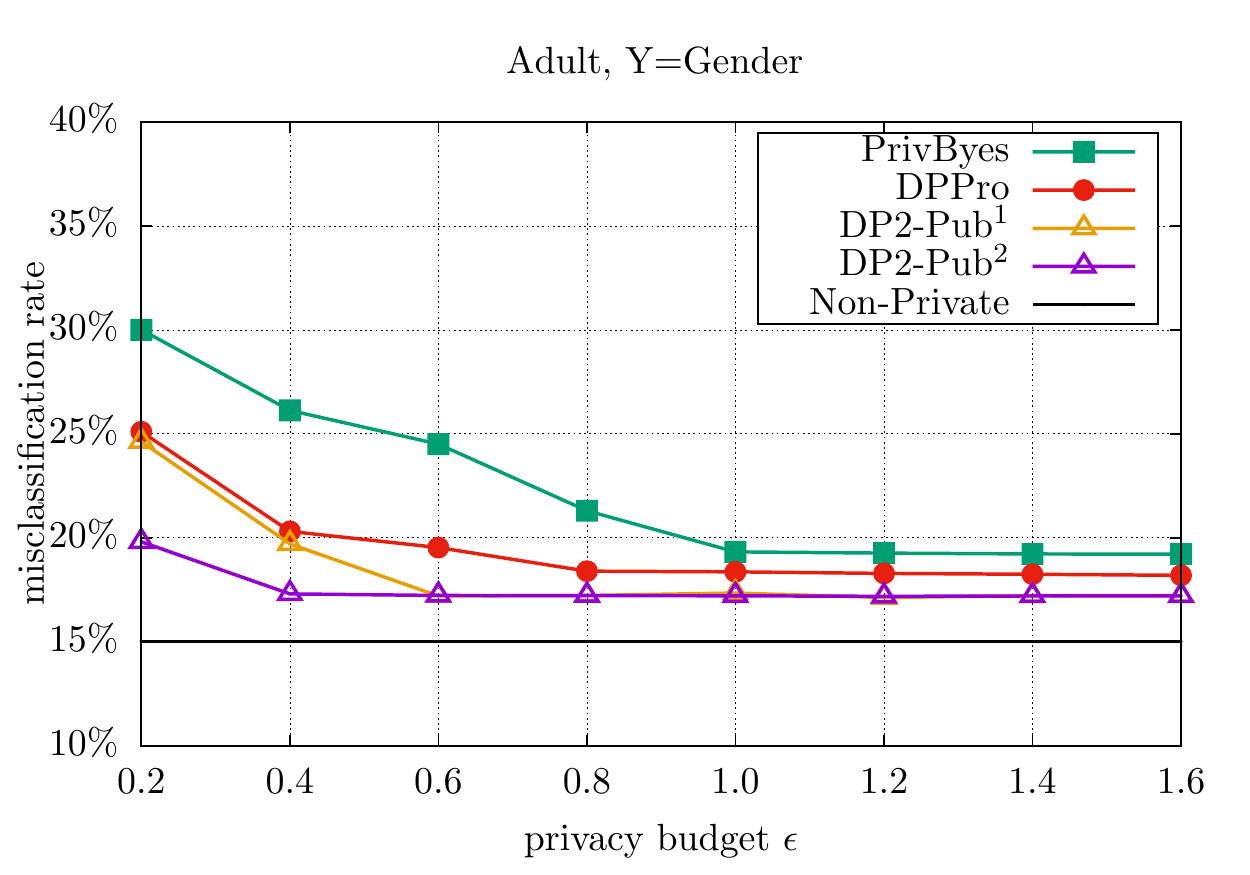}
\end{minipage}
\begin{minipage}[t]{0.24\textwidth}
\centering
\includegraphics[width=\textwidth]{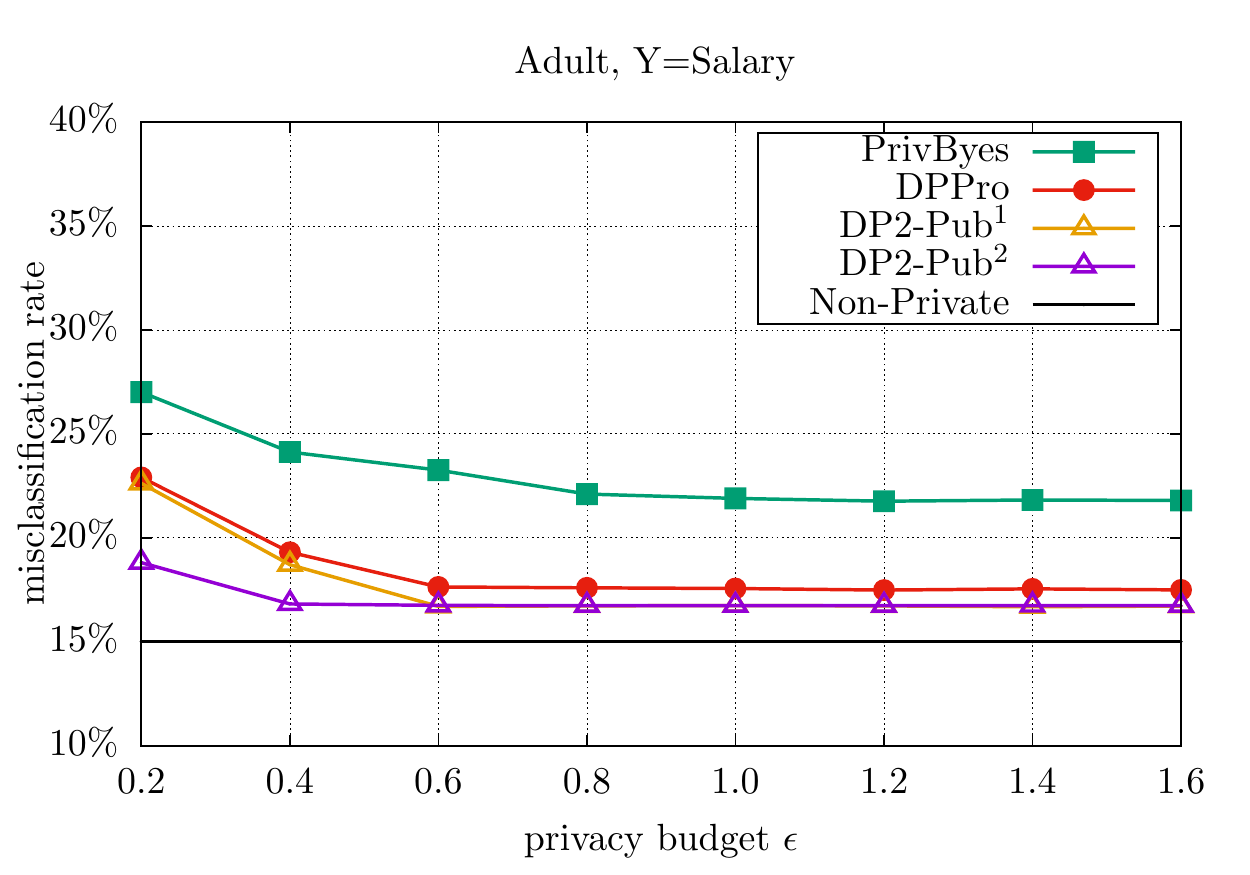}
\end{minipage}
\caption{Results of SVM with different $\epsilon$}
\label{fig5}
\end{figure}

For the second task, we evaluate the performance of PrivBayes, DPPro, DP2-Pub$^{1}, $DP-Pub$^{2}$, and Non-Private (no DP is considered) for SVM classification. Figure \ref{fig5} shows the misclassification rate of each approach under various different privacy budgets. One can see that the error of Non-Private remains unchanged for all $\epsilon$ since it does not consider differential privacy. One can also see that both DP-Pub$^{1}$ and DP-Pub$^{2}$ outperform PrivBayes and DPPro on almost all datasets. The reason for the higher classification accuracy of our approach lies in that it can achieve higher data utility with a better retention of correlations among attribute variables and a higher accuracy of joint distributions. More specifically, both DP-Pub$^{1}$ and DP-Pub$^{2}$ retain the data characteristics while satisfying privacy guarantee, thus can help to obtain good results of SVM classifications. Moreover, the misclassification rate decreases faster when $\epsilon$ increases from $0.2$ to $0.6$, and the decrease of the misclassification rate is not obvious when $\epsilon$ is larger than $0.8$. This indicates that a higher privacy level with a small $\epsilon$ leads to a lower data utility.

\section{Conclusions and Future Research}\label{sec:con}

In this paper, we propose a differentially private data publication mechanism DP2-Pub consisting of two phases, attribute clustering and data randomization. Specifically, in the first phase, we present the procedure of attribute clustering using the Markov blanket model based on the differentially private Bayesian network to achieve attribute clustering and obtain a reasonable allocation of privacy budget. In the second phase, we design an invariant post randomization method by conducting a double-perturbation while satisfying local differential privacy. Our privacy analysis shows that DP2-Pub satisfies differential privacy. We also extend our mechanism making it suitable for the scenario with a semi-honest server in a local-differential privacy manner. Comprehensive experiments on four real-world datasets demonstrate that DP2-Pub outperforms existing methods and improves data utility with strong privacy guarantee.

In our future research, we intend to combine other effective dimensionality reduction techniques\cite{OPM,LRS} with differential privacy to investigate their impact on the data utility of published data. Particularly, we intend to combine DP with manifold learning \cite{OPM}, which is a popular approach for non-linear dimensionality reduction that maps a high dimensional data space into a low-dimensional manifold representation of the data while preserving a certain form of geometric relationships between the data points.

\section*{Acknowledgment}

This work was partially supported by the US National Science Foundation under grant CNS-1704397. 

\footnotesize
\bibliographystyle{IEEEtran}      
\bibliography{ref}

\begin{thebibliography}{10}
\providecommand{\url}[1]{#1}
\csname url@samestyle\endcsname
\providecommand{\newblock}{\relax}
\providecommand{\bibinfo}[2]{#2}
\providecommand{\BIBentrySTDinterwordspacing}{\spaceskip=0pt\relax}
\providecommand{\BIBentryALTinterwordstretchfactor}{4}
\providecommand{\BIBentryALTinterwordspacing}{\spaceskip=\fontdimen2\font plus
\BIBentryALTinterwordstretchfactor\fontdimen3\font minus
  \fontdimen4\font\relax}
\providecommand{\BIBforeignlanguage}[2]{{%
\expandafter\ifx\csname l@#1\endcsname\relax
\typeout{** WARNING: IEEEtran.bst: No hyphenation pattern has been}%
\typeout{** loaded for the language `#1'. Using the pattern for}%
\typeout{** the default language instead.}%
\else
\language=\csname l@#1\endcsname
\fi
#2}}
\providecommand{\BIBdecl}{\relax}
\BIBdecl

\bibitem{GNAH}
N.~A. Ghani, S.~Hamid, I.~A.~T. Hashem, and E.~Ahmed, ``Social media big data
  analytics: A survey,'' \emph{Computers in Human Behavior}, vol. 101, pp.
  417--428, 2019.

\bibitem{FBW}
B.~C. Fung, K.~Wang, R.~Chen, and P.~S. Yu, ``Privacy-preserving data
  publishing: A survey of recent developments,'' \emph{ACM Computing Surveys
  (Csur)}, vol.~42, no.~4, pp. 1--53, 2010.

\bibitem{GRM}
R.~Gay, P.~M{\'e}aux, and H.~Wee, ``Predicate encryption for multi-dimensional
  range queries from lattices,'' in \emph{IACR International Workshop on Public
  Key Cryptography}.\hskip 1em plus 0.5em minus 0.4em\relax Springer, 2015, pp.
  752--776.

\bibitem{SLA}
L.~Sweeney, ``k-anonymity: A model for protecting privacy,''
  \emph{International Journal of Uncertainty, Fuzziness and Knowledge-Based
  Systems}, vol.~10, no.~05, pp. 557--570, 2002.

\bibitem{MKV}
A.~Machanavajjhala, D.~Kifer, J.~Gehrke, and M.~Venkitasubramaniam,
  ``l-diversity: Privacy beyond k-anonymity,'' \emph{ACM Transactions on
  Knowledge Discovery from Data (TKDD)}, vol.~1, no.~1, pp. 3--es, 2007.

\bibitem{LLV}
N.~Li, T.~Li, and S.~Venkatasubramanian, ``t-closeness: Privacy beyond
  k-anonymity and l-diversity,'' in \emph{2007 IEEE 23rd International
  Conference on Data Engineering}.\hskip 1em plus 0.5em minus 0.4em\relax IEEE,
  2007, pp. 106--115.

\bibitem{DCD}
D.~Cynthia, ``Differential privacy,'' \emph{Automata, languages and
  programming}, pp. 1--12, 2006.

\bibitem{RYY}
X.~Ren, C.-M. Yu, W.~Yu, S.~Yang, X.~Yang, J.~A. McCann, and S.~Y. Philip,
  ``Lopub: High-dimensional crowdsourced data publication with local
  differential privacy,'' \emph{IEEE Transactions on Information Forensics and
  Security}, vol.~13, no.~9, pp. 2151--2166, 2018.

\bibitem{ZCG}
J.~Zhang, G.~Cormode, C.~M. Procopiuc, D.~Srivastava, and X.~Xiao, ``Privbayes:
  Private data release via bayesian networks,'' \emph{ACM Transactions on
  Database Systems (TODS)}, vol.~42, no.~4, pp. 1--41, 2017.

\bibitem{ZCPC}
------, ``Privbayes: Private data release via bayesian networks,'' \emph{ACM
  Transactions on Database Systems (TODS)}, vol.~42, no.~4, pp. 1--41, 2017.

\bibitem{CXZ}
R.~Chen, Q.~Xiao, Y.~Zhang, and J.~Xu, ``Differentially private
  high-dimensional data publication via sampling-based inference,'' in
  \emph{Proceedings of the 21th ACM SIGKDD International Conference on
  Knowledge Discovery and Data Mining}, 2015, pp. 129--138.

\bibitem{LXJ}
H.~Li, L.~Xiong, and X.~Jiang, ``Differentially private synthesization of
  multi-dimensional data using copula functions,'' in \emph{Advances in
  database technology: proceedings. International conference on extending
  database technology}, vol. 2014.\hskip 1em plus 0.5em minus 0.4em\relax NIH
  Public Access, 2014, p. 475.

\bibitem{XRZ}
C.~Xu, J.~Ren, Y.~Zhang, Z.~Qin, and K.~Ren, ``Dppro: Differentially private
  high-dimensional data release via random projection,'' \emph{IEEE
  Transactions on Information Forensics and Security}, vol.~12, no.~12, pp.
  3081--3093, 2017.

\bibitem{CTS}
X.~Cheng, P.~Tang, S.~Su, R.~Chen, Z.~Wu, and B.~Zhu, ``Multi-party
  high-dimensional data publishing under differential privacy,'' \emph{IEEE
  Transactions on Knowledge and Data Engineering}, 2019.

\bibitem{WBL}
T.~Wang, J.~Blocki, N.~Li, and S.~Jha, ``Locally differentially private
  protocols for frequency estimation,'' in \emph{26th $\{$USENIX$\}$ Security
  Symposium ($\{$USENIX$\}$ Security 17)}, 2017, pp. 729--745.

\bibitem{EPK}
{\'U}.~Erlingsson, V.~Pihur, and A.~Korolova, ``Rappor: Randomized aggregatable
  privacy-preserving ordinal response,'' in \emph{Proceedings of the 2014 ACM
  SIGSAC conference on computer and communications security}, 2014, pp.
  1054--1067.

\bibitem{BAS}
R.~Bassily and A.~Smith, ``Local, private, efficient protocols for succinct
  histograms,'' in \emph{Proceedings of the forty-seventh annual ACM symposium
  on Theory of computing}, 2015, pp. 127--135.

\bibitem{FPE}
G.~Fanti, V.~Pihur, and {\'U}.~Erlingsson, ``Building a rappor with the
  unknown: Privacy-preserving learning of associations and data dictionaries,''
  \emph{Proceedings on Privacy Enhancing Technologies}, vol. 2016, no.~3, pp.
  41--61, 2016.

\bibitem{JGW}
C.~Ju, Q.~Gu, G.~Wu, and S.~Zhang, ``Local differential privacy protection of
  high-dimensional perceptual data by the refined bayes network,''
  \emph{Sensors}, vol.~20, no.~9, p. 2516, 2020.

\bibitem{WXY}
N.~Wang, X.~Xiao, Y.~Yang, J.~Zhao, S.~C. Hui, H.~Shin, J.~Shin, and G.~Yu,
  ``Collecting and analyzing multidimensional data with local differential
  privacy,'' in \emph{2019 IEEE 35th International Conference on Data
  Engineering (ICDE)}.\hskip 1em plus 0.5em minus 0.4em\relax IEEE, 2019, pp.
  638--649.

\bibitem{DFC}
J.~Domingo-Ferrer and J.~Soria-Comas, ``Multi-dimensional randomized
  response,'' \emph{IEEE Transactions on Knowledge and Data Engineering}, 2020.

\bibitem{TTC}
S.~Takagi, T.~Takahashi, Y.~Cao, and M.~Yoshikawa, ``P3gm: Private
  high-dimensional data release via privacy preserving phased generative
  model,'' in \emph{2021 IEEE 37th International Conference on Data Engineering
  (ICDE)}.\hskip 1em plus 0.5em minus 0.4em\relax IEEE, 2021, pp. 169--180.

\bibitem{DKF}
D.~Koller and N.~Friedman, \emph{Probabilistic graphical models: principles and
  techniques}.\hskip 1em plus 0.5em minus 0.4em\relax MIT press, 2009.

\bibitem{TSAB}
A.~B. Tsybakov, \emph{Introduction to nonparametric estimation}.\hskip 1em plus
  0.5em minus 0.4em\relax Springer Science \& Business Media, 2008.

\bibitem{DMN}
C.~Dwork, F.~McSherry, K.~Nissim, and A.~Smith, ``Calibrating noise to
  sensitivity in private data analysis,'' in \emph{Theory of cryptography
  conference}.\hskip 1em plus 0.5em minus 0.4em\relax Springer, 2006, pp.
  265--284.

\bibitem{MTK}
F.~McSherry and K.~Talwar, ``Mechanism design via differential privacy,'' in
  \emph{48th Annual IEEE Symposium on Foundations of Computer Science
  (FOCS'07)}.\hskip 1em plus 0.5em minus 0.4em\relax IEEE, 2007, pp. 94--103.

\bibitem{DJW}
J.~C. Duchi, M.~I. Jordan, and M.~J. Wainwright, ``Local privacy and
  statistical minimax rates,'' in \emph{2013 IEEE 54th Annual Symposium on
  Foundations of Computer Science}.\hskip 1em plus 0.5em minus 0.4em\relax
  IEEE, 2013, pp. 429--438.

\bibitem{WSL}
S.~L. Warner, ``Randomized response: A survey technique for eliminating evasive
  answer bias,'' \emph{Journal of the American Statistical Association},
  vol.~60, no. 309, pp. 63--69, 1965.

\bibitem{DAR}
C.~Dwork, A.~Roth \emph{et~al.}, ``The algorithmic foundations of differential
  privacy.'' \emph{Foundations and Trends in Theoretical Computer Science},
  vol.~9, no. 3-4, pp. 211--407, 2014.

\bibitem{TSA}
T.~K. Nayak and S.~A. Adeshiyan, ``On invariant post-randomization for
  statistical disclosure control,'' \emph{International Statistical Review},
  vol.~84, no.~1, pp. 26--42, 2016.

\bibitem{FDM}
F.~D. McSherry, ``Privacy integrated queries: an extensible platform for
  privacy-preserving data analysis,'' in \emph{Proceedings of the 2009 ACM
  SIGMOD International Conference on Management of data}, 2009, pp. 19--30.

\bibitem{nltcs}
K.~G. Manton, ``National long-term care survey: 1982, 1984, 1989, 1994, 1999,
  and 2004,'' \emph{Inter-university Consortium for Political and Social
  Research}, 2010.

\bibitem{acs}
S.~Ruggles, K.~Genadek, R.~Goeken, J.~Grover, and M.~Sobek, ``Integrated public
  use microdata series: Version 6.0,'' \emph{Retrieved from
  https://international.ipums.org.}, 2015.

\bibitem{adult}
K.~Bache and M.~Lichman, ``{UCI} machine learning repository,'' 2013.

\bibitem{SNSR}
N.~I. Sapankevych and R.~Sankar, ``Time series prediction using support vector
  machines: a survey,'' \emph{IEEE Computational Intelligence Magazine},
  vol.~4, no.~2, pp. 24--38, 2009.

\bibitem{OPM}
P.~Orzechowski, F.~Magiera, and J.~H. Moore, ``Benchmarking manifold learning
  methods on a large collection of datasets,'' in \emph{European Conference on
  Genetic Programming (Part of EvoStar)}.\hskip 1em plus 0.5em minus
  0.4em\relax Springer, 2020, pp. 135--150.

\bibitem{LRS}
R.~Li, Y.~Shi, Y.~Han, Y.~Shao, M.~Qi, and B.~Li, ``Active and compact entropy
  search for high-dimensional bayesian optimization,'' \emph{IEEE Transactions
  on Knowledge and Data Engineering}, 2021.

\end{thebibliography}


\begin{IEEEbiography}[{\includegraphics[width=1in,height=1.25in,clip,keepaspectratio]{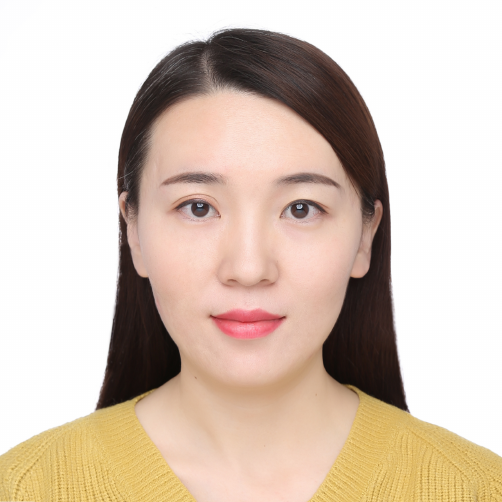}}]{Honglu Jiang}
received her Ph.D. degree in computer science from The George Washington University in 2021. Currently she is an assistant professor in the Department of Computer Science and Software Engineering at Miami University. Her research interests include wireless and mobile security, privacy preservation, federated learning and data analytics. She is a member of IEEE.
\end{IEEEbiography}


\begin{IEEEbiography}[{\includegraphics[width=1in,height=1.25in,clip,keepaspectratio]{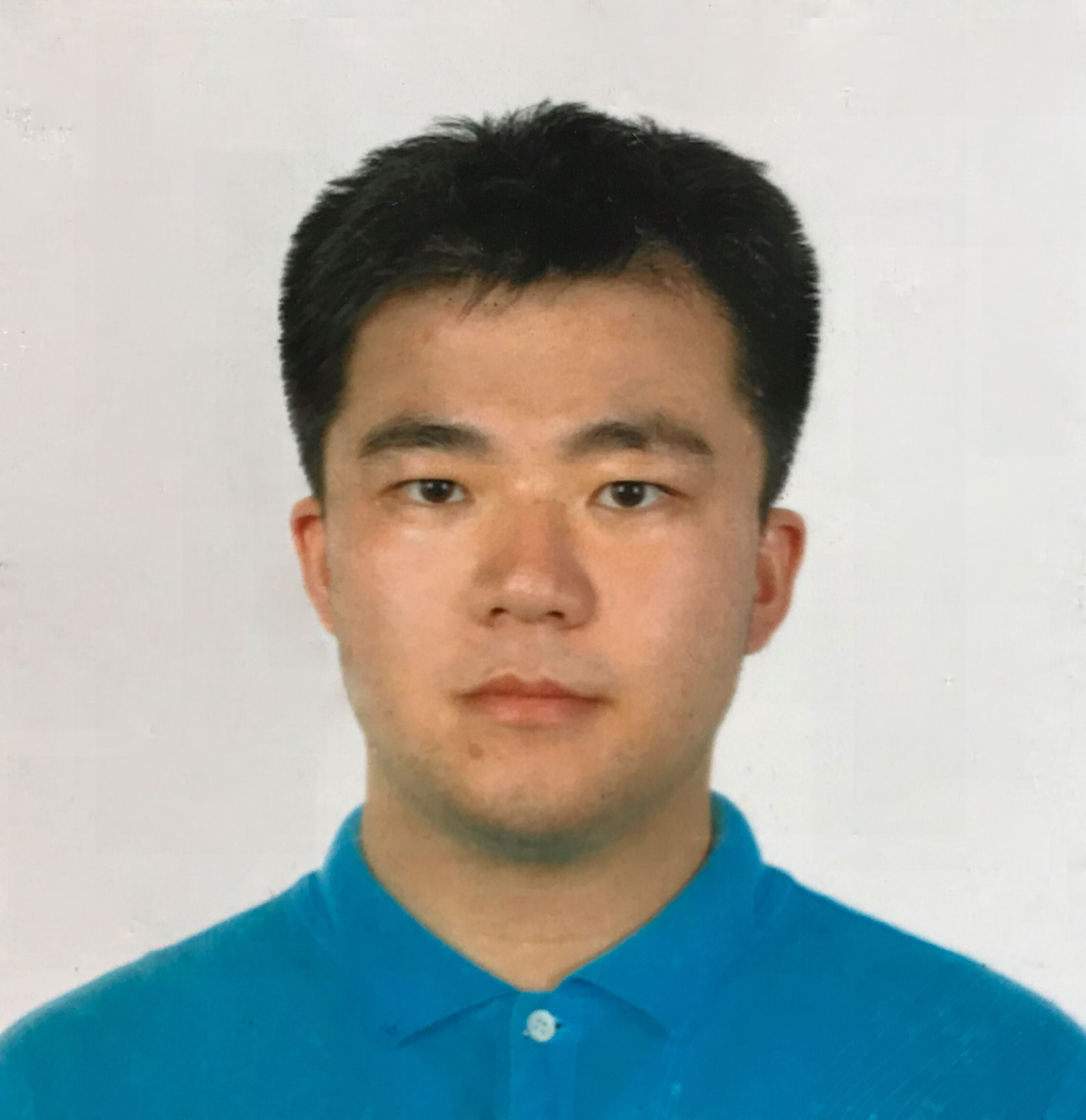}}]{Haotian Yu}
received his B.A. degree from the University of Minnesota in 2017, and M.S. degree in Data Analytics from The George Washington University in 2020. His research interests include data analysis for social networks.
\end{IEEEbiography}


\begin{IEEEbiography}[{\includegraphics[width=1in,height=1.25in,clip,keepaspectratio]{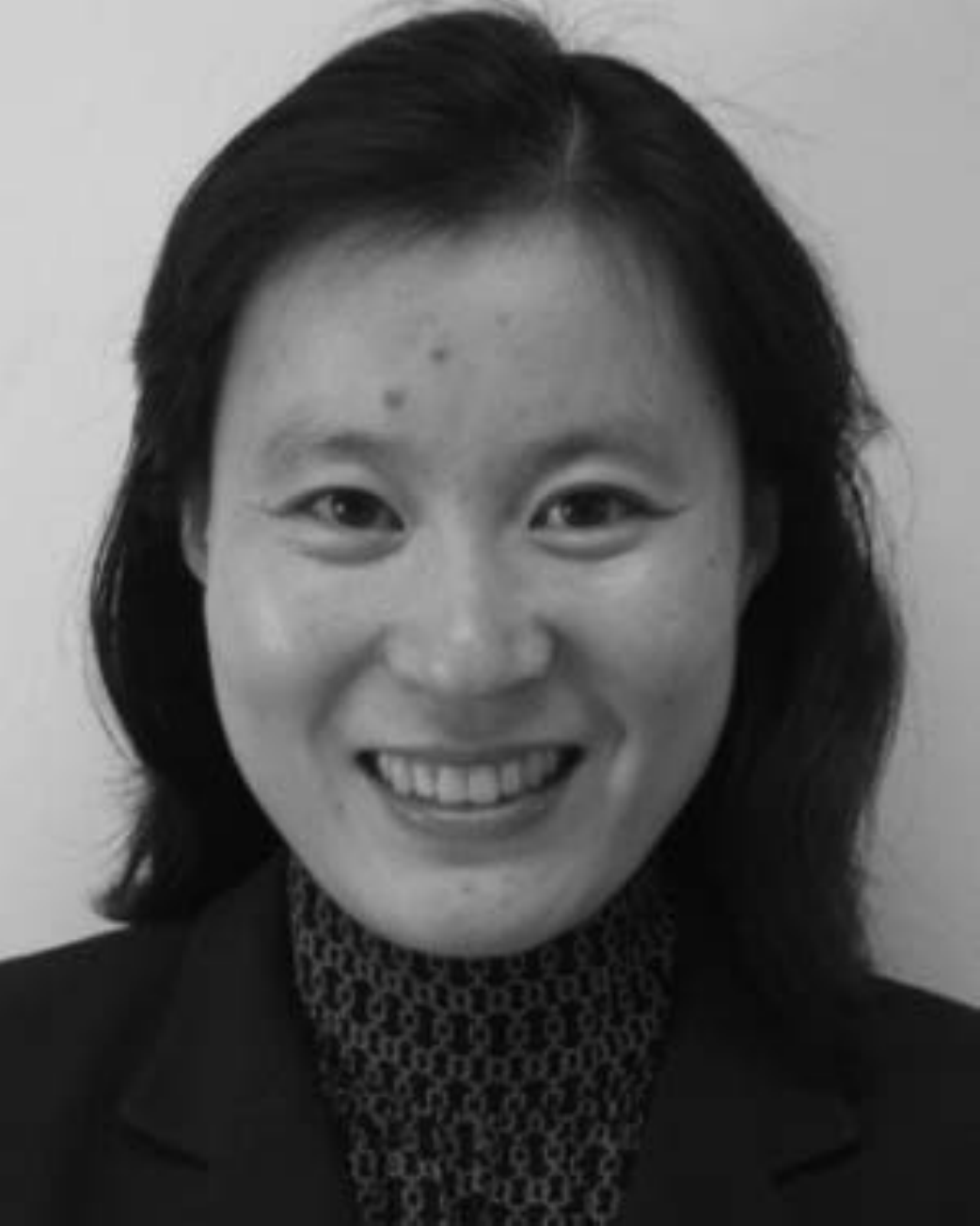}}]{Xiuzhen Cheng}
received her M.S. and Ph.D. degrees in computer science from the University of Minnesota---Twin Cities in 2000 and 2002, respectively. She is a professor of Computer Science at Shandong University, P. R. China. Her current research focuses on Blockchain computing, privacy-aware computing, and wireless and mobile security. She served/is serving on the editorial boards of several technical journals and the technical program committees of various professional conferences/workshops. She was a faculty member in the Department of Computer Science at The George Washington University from September 2002 to August 2020, and worked as a program director for the US National Science Foundation (NSF) from April to October in 2006 (full time) and from April 2008 to May 2010 (part time). She received the NSF CAREER Award in 2004. She is a member of ACM, and a Fellow of IEEE.
\end{IEEEbiography}


\begin{IEEEbiography}[{\includegraphics[width=1in,height=1.25in,clip,keepaspectratio]{./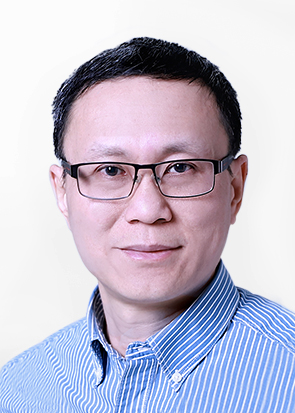}}]{Jian Pei} is currently Professor at Duke University.  His professional interest is to facilitate efficient, fair, and sustainable usage of data and data analytics for social, commercial and ecological good. Through inventing, implementing and deploying a series of data mining principles and methods, he produced remarkable values to academia and industry. His algorithms have been adopted by industry, open source toolkits and textbooks. His publications have been cited more than 107,000 times. He is also an active and productive volunteer for professional community services, such as chairing ACM SIGKDD, running many premier academic conferences in his areas, and being editor-in-chief or associate editor for the flagship journals in his fields. He is recognized as a fellow of the Royal Society of Canada (i.e., the national academy of Canada), the Canadian Academy of Engineering, ACM, and IEEE.  He received a series of prestigious awards, such as the ACM SIGKDD Innovation Award, the ACM SIGKDD Service Award, and the IEEE ICDM Research Award.
\end{IEEEbiography}


\begin{IEEEbiography}[{\includegraphics[width=1in,height=1.25in,clip,keepaspectratio]{./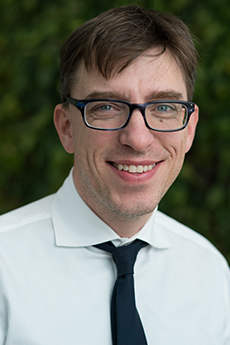}}]{Robert Pless} received the BS degree in computer science from Cornell University in 1994 and the PhD degree in computer science from the University of Maryland in 2000. He is a Patrick and Donna Martin Professor of Computer Science at The George Washington University. His research interests focus on computer vision with applications in environmental science, medical imaging, robotics and virtual reality. He chaired the IEEE Workshop on Omnidirectional Vision and Camera Networks (OMNIVIS) in 2003, and the MICCAI Workshop on Manifold Learning in Medical Imagery in 2008. He received the NSF career award in 2006.
\end{IEEEbiography}


\begin{IEEEbiography}[{\includegraphics[width=1in,height=1.25in,clip,keepaspectratio]{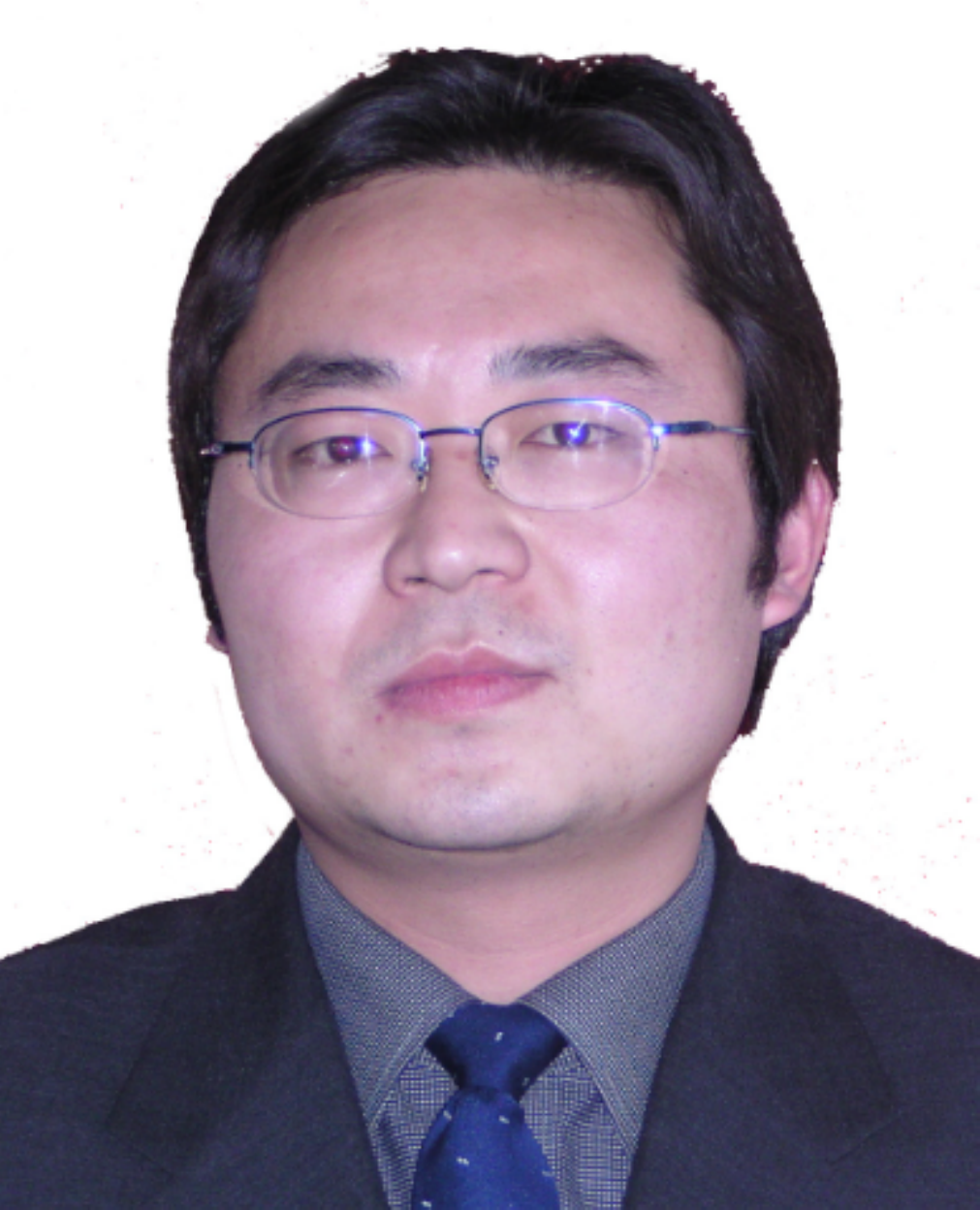}}]{Jiguo Yu} received the Ph.D. degree from the School of Mathematics, Shandong University, in 2004. He became a Full Professor in the School of Computer Science, Qufu Normal University, Shandong, China, in 2007, and currently is a Full Professor at the Qilu University of Technology, Jinan, Shandong China. His main research interests include blockchain, IoT security, privacy-aware computing, wireless distributed computing, and graph theory. He is a Fellow of IEEE, a member of ACM, and a Senior Member of China Computer Federation (CCF).
\end{IEEEbiography}

\end{document}




%

\end{document}